\newtheorem{proposition}{Proposition}
\newtheorem{lemma}{Lemma}
\newtheorem{theorem}{Theorem}
\newtheorem{definition}{Definition}
\newtheorem{assumption}{Assumption}
\newtheorem{problem}{Problem}
\theoremstyle{remark}\newtheorem{remark}{Remark}
\title{\LARGE \bf
	Learning Robust Data-based LQG Controllers from Noisy Data
}
\author{
	Wenjie~Liu,~Jian~Sun,~Gang Wang,~Francesco Bullo,~\IEEEmembership{Fellow,~IEEE},
	and Jie Chen,~\IEEEmembership{Fellow,~IEEE}
	\thanks{The work was supported in part by the National Natural Science Foundation of China under Grants 61925303, 62173034, 62088101, the BIT Research and Innovation Promoting Project under Grant
		2022YCXZ007, 
		and the Chongqing Natural Science Foundation under Grant 2021ZX4100027. 
		}
		\thanks{
			W. Liu, J. Sun, and G. Wang are with the National Key Lab of Autonomous Intelligent Unmanned Systems and  School of Automation, Beijing Institute of Technology, Beijing 100081, China, and also with the Beijing Institute of Technology Chongqing Innovation Center, Chongqing 401120, China (e-mail: liuwenjie@bit.edu.cn; sunjian@bit.edu.cn;  gangwang@bit.edu.cn).
			
			F. Bullo is with the Mechanical Engineering Department and the Center of Control, Dynamical Systems and Computation, UC Santa Barbara, CA 93106-5070, USA (e-mail: bullo@ucsb.edu).

			J. Chen is with the Department of Control Science and Engineering, Tongji University, Shanghai 201804, China, and also with the National Key Lab of Autonomous Intelligent Unmanned Systems, Beijing Institute of Technology, Beijing 100081, China 	
			(e-mail: chenjie@bit.edu.cn).
		}
	}
\begin{document}
		\maketitle

		\begin{abstract}
			This paper addresses the joint state estimation and control problems for unknown linear
			time-invariant systems subject to both process and measurement noise. The aim is to redesign the
			linear quadratic Gaussian (LQG) controller based solely on data. The LQG controller comprises a
			linear quadratic regulator (LQR) and a steady-state Kalman observer; while the data-based LQR design
			problem has been previously studied, constructing the Kalman gain and the LQG controller from noisy
			data presents a novel challenge.
			
			In this work, a data-based formulation for computing the steady-state Kalman gain is proposed based
			on semi-definite programming (SDP) using some noise-free input-state-output data.
			Additionally, a data-based LQG controller is developed, which is shown to be equivalent to the
			model-based LQG controller. For cases where offline data are corrupted by noise, a robust data-based
			observer gain is constructed by tackling a relaxed SDP. The proposed controllers are proven to
			achieve robust global exponential stability (RGES) for state estimation and input-to-state practical
			stability (ISpS) under standard conditions. Finally, numerical tests are conducted to validate the
			proposed controllers' correctness and effectiveness.

		\end{abstract}
		\begin{keywords} Data-driven control, state estimation, noisy data, semi-definite program, linear quadratic Gaussian.
		\end{keywords}

		\section{Introduction}\label{sec:intro}
		The states of a dynamical system refer to a set of variables that provide a complete characterization of its internal conditions or status at a given time \cite{simon2006optimal}, which plays a fundamental role in control, monitoring, and high-level information extraction of real-world systems.
		Unfortunately, the complete states are not always fully observable; and even if measurements can be collected, the inevitable sensor miscalibration and noise may degrade the accuracy. 
		As such, using the most likely value of a state reconstructed from raw measurements serves as an alternative, that is what is known as state estimation.
		Numerous efforts have contributed to the development of state estimation methods and theory as well as their applications in networked control systems, cyber-physical systems, and smart power grids; see e.g., \cite{catlin2012estimation,pasqualetti2012distributed,wang2019distribution,eng2022}.	

		Most existing results on state estimation capitalize on explicit system models describing the underlying dynamics.
		However, in many situations e.g., robotics \cite{DistCtrlRobotNetw}, biology \cite{kitano2002systems}, or human-in-the-loop systems \cite{inoue2019weak}, modeling using first-principles can be challenging or parameters obtained by system identification methods may be inaccurate or unreliable.
		On the contrary, data can be often collected through open-loop experiments.
		This approach has inspired the recent literature on data-driven control \cite{Hou2013from,persis2020data,pang2022comparison,markovsky2023data}, i.e., on designing controllers directly from data. 
		In the literature, most related results were derived based on the \emph{Willems et al.}'s fundamental lemma \cite{willems2005note}.
		Rather than capitalizing on an explicit system model, the lemma demonstrates that any finite-time trajectory of a linear time-invariant system can be expressed as a linear combination of some past trajectories generated by the system,  if the input signal is sufficiently rich.
		This offers an effective means for describing linear systems without explicit models.  
		Although this lemma has led to diverse data-driven control schemes, e.g., model predictive control \cite{berberich2019data,Coulson2019cdc,Liu2022data,sfl,csl2021cortes,muller2022data,breschi2022uncertainty} and explicit feedback control policies \cite{persis2020data,van2020data,li2022robust,kang2022minimum,wang2022event,wang2023data}, very few works have investigated its application in state estimation.
		
		In the context of data-driven state estimation, it is often assumed that input-state-output data can be collected offline while only input-output data are obtained during online operation.
		For noise-free systems, data-driven Luenberger observers were designed in \cite{van2020data,adachi2021dual} by exploiting the separation principle of estimation and control \cite{duncan1971on} together with the fundamental lemma.
		To handle unknown inputs or noise, unknown input observer (UIO) in \cite{turan2021data},  moving horizon estimation (MHE) in \cite{wolff2022robust}, and set-based estimation in \cite{alanwar2021data} were developed.
		In addition to bounding conditions, \cite{turan2021data} posed a more stringent condition on the noise dimension.
		Although \cite{wolff2022robust,alanwar2021data} only constrain the noise bound, an optimization problem should be solved online at each time, rendering them less computationally appealing for time-critical applications.
		It is worth emphasizing that, to estimate the state, these approaches require collecting the input data on-the-fly, as they do not tackle the controller design problem.
		
		The goal of this paper is to address robust data-based state estimation and control problems for unknown linear time-invariant systems subject to noise. The idea is using input-state-output data collected offline to redesign the linear quadratic Gaussian (LQG) controller
		that returns optimal state estimates and control inputs simultaneously under certain conditions.
		To fulfill this design, robust data-based formulations for the linear quadratic regulator (LQR) and steady-state Kalman observer are required.
		The problem of learning LQR from noisy data has been dealt with in \cite{depersis2021lowcomplexity}. 
		Our focus is thus placed on formulating the robust data-based Kalman observer and LQG controller. We consider two settings depending on whether pre-collected data are noise-corrupted or not.
		When noise-free data are collected offline, we propose a data-based SDP for computing
		the steady-state Kalman  gain.
		It is shown that the proposed data-based LQG controller is equivalent to its model-based counterpart provided the online noise is additive white Gaussian.
		When offline data are corrupted by noise, 
		we relax the original data-based SDP by adding a soft constraint to restore its feasibility as well as ensure the stability of the resulting Kalman gain.
		In this setting, although the controller does not share equivalence with the model-based LQG controller, it is robust to bounded noise.
		Finally, we establish, under rigorous conditions, robust global exponential stability (RGES) for the proposed estimator and input-to-state practical stability (ISpS) for the system in both cases.
		
		A few setups similar to ours appeared very recently in \cite{Alpago2020extended,al2022behavioral,furieri2022near,wang2022data}.
		To be specific, de-noising methods for offline data were investigated in \cite{Alpago2020extended,wang2022data}, to improve the performance of the data-driven predictive controller of \cite{Coulson2019cdc}.
		Precisely, the work of \cite{Alpago2020extended} designed a data-based extended Kalman filter for reducing the effect of noise in outputs, and a low-rank subspace identification noise estimation algorithm was developed in \cite{wang2022data}.
		By expressing a linear system equivalently in the behavior space, finding the LQG controller reduces to finding a static feedback controller which can be obtained using data \cite{al2022behavioral}.
		In \cite{furieri2022near}, a near-optimal safe output feedback controller was obtained using noisy data, whose suboptimality with respect to the model-based LQG controller was analyzed.
		It is worth stressing the differences between these works and ours.
		First, the results \cite{Alpago2020extended,al2022behavioral,furieri2022near,wang2022data} have only considered the data-driven control problem, whereas data-driven state estimation is investigated too in the present work.
		Moreover, \cite{al2022behavioral} requires the pre-collected data (a.k.a. expert data) be generated from the system that has been stabilized by an LQG controller, which is stringent and unrealistic.
		The work of \cite{furieri2022near} can only handle input and measurement noise but not the process noise.

		In a nutshell, the contributions of this paper are  as follows:
		\begin{itemize}
			\item [c1)]
			When noise-free data are collected, the data-based LQG controller is designed by solving data-based SDPs, which is shown equivalent to the model-based LQG controller; 
			\item [c2)]
			To deal with noise in offline data, robust data-based SDPs are derived by incorporating a soft constraint to recover the SDP feasibility and stability of the Kalman gain; and,
			\item [c3)]
			For both noise-free and noisy data-based controllers, it is shown that the estimator is RGES and the closed-loop system is ISpS under standard conditions.
		\end{itemize}

		\emph{Notation:}
		Denote the set of real numbers, integers, and positive integers by $\mathbb{R}$, $\mathbb{N}$, and $\mathbb{N}_{+}$, respectively.
		For a full row-rank matrix $M$, its right pseudo-inverse is denoted by $M^\dag$.
		Let $\| \cdot\|$ denote the vector Euclidean norm or  matrix spectral norm.
		Given a measurable time function $f : \mathbb{N} \rightarrow \mathbb{R}^n$ 
		and a time interval $[0,t)$, 
		its $\mathcal{L}_{\infty}$ norm  on $[0,t)$ is $\Vert f_t\Vert_{\infty} := {\rm ess}\sup_{s \in [0,t)} \Vert f(s)\Vert$.
		For matrices $A$, $B$, and $C$ with compatible dimensions, we abbreviate $ABC(AB)^\prime$ to $AB \cdot C[\star]^\prime$; $\lambda(A)$ represents the collection of all singular values (or eigenvalues if $A$ is square); $\underline{\lambda}_A$ ($\bar{\lambda}_A$)  the minimum (maximum) singular value of  $A$.
		Let $x_{[t_1, t_2]} := [x(t_1)~ x(t_1 + 1) \, \cdots \, x(t_2)]$ denote a stacked window of the signal $x$ in time interval $[t_1, t_2]$. 
		
		The Hankel matrix associated with the sequence $\{x(t)\}_{t = 0}^{N - 1}$, is denoted by
		\begin{equation*}
			H_{L}(x_{[0,N-1]}):=\left[
			\begin{matrix}
				x(0)& x(1) & \ldots & x(N-L)\\
				\vdots & \vdots & \ddots & \vdots \\
				x(L - 1) & x(L) & \ldots & x(N-1)
			\end{matrix}
			\right].
		\end{equation*}
		The definition of persistency of excitation is given below.
		\begin{definition}[{Persistency of excitation}]\label{def:pe}
			A sequence $\{x(t)\}_{t = 0}^{T - 1}$ with $x(t) \in \mathbb{R}^{n_x}$ and $T\ge (n + 1)L + 1$ is persistently exciting of order $L$ if ${\rm {rank}}(H_{L}(x_{[0,T-1]})) = n_x L$.
		\end{definition}
		\section{Preliminaries and Problem Formulation}\label{sec:preliminaries}
		This section introduces the problem of interest and some relevant results in the literature.
		
		\subsection{Problem formulation}\label{sec:preliminaries:problem}
		Consider the following linear system
		\begin{subequations}\label{eq:sys}
			\begin{align}
				x(t + 1) &= Ax(t) + Bu(t) + w(t)\label{eq:sys:x}\\ 
				y(t) &= Cx(t) + v(t)\label{eq:sys:y}
			\end{align}
		\end{subequations}
		where $x(t)\in \mathbb{R}^{n_x}$ is the state , $u(t)\in \mathbb{R}^{n_u}$ the control input, $y(t)\in \mathbb{R}^{n_y}$  the output, $w(t)\in \mathbb{R}^{n_x}$  the process noise, and $v(t)\in \mathbb{R}^{n_y}$ the measurement noise.
		Throughout this paper, the following assumptions hold.
		\begin{assumption}[{Controllability and observability}]\label{as:co}
			The pair $(A, B)$ is controllable and the pair $(A,C)$ is observable.
		\end{assumption}
		\begin{assumption}[System trajectory]\label{as:iso-io}
			The matrices $(A, B, C)$ in \eqref{eq:sys} are unknown.
			Input-state-output trajectories, i.e., $x_{[0,T]}$, $u_{[0,T - 1]}$ and $y_{[0,T - 1]}$ of system \eqref{eq:sys} can be collected during offline experiments, and the input sequence $u_{[0,T]}$ is persistently exciting of order $n_x + 1$.
			During online operation, only input-output data, i.e., $u(t)$ and $y(t)$, are available.
		\end{assumption}
		\begin{assumption}[Bounded noise]\label{as:noise}
			For all $t \in \mathbb{N}$, there exist non-negative constants $\bar{w}$ and $\bar{v}$ such that $w(t) \in \mathbb{B}_{\bar{w}}:= \{w|\Vert w \Vert \le \bar{w}\}$ and $v(t) \in \mathbb{B}_{\bar{v}} := \{v|\Vert v \Vert \le \bar{v}\}$.
		\end{assumption}

		\begin{remark}
			\label{rmk:as}
			Assumption \ref{as:iso-io} can be fulfilled in a remote networked estimation and control scenario.
			Specifically, if the state is of large size (e.g., video), transmitting via a network can be time-consuming and costly for real-time applications.
			In addition, for states containing sensitive information, eavesdropping or damage may happen during network transmission. 
			As such, transmitting outputs and recovering the state afterwards appears an appealing alternative.
			On the other hand, state can be collected in offline experiments and transmitted only once to the controller using possibly a different physical medium.
			This assumption has appeared in recent data-driven state estimation results  \cite{turan2021data,wolff2022robust}.
			For the data-driven set-based estimation method in \cite{alanwar2021data}, matrix $C$ was assumed known, which is similar to the assumption of having partial knowledge of the states during online inference. 
			Instead of Assumption \ref{as:noise}, the results of \cite{turan2021data,wolff2022robust} impose additional constraints on the noise.
			In \cite{turan2021data}, it is required that the dimensions of noise vectors $w(t)$ and $v(t)$ should not exceed the dimension of the output; see \cite[Proposition 1]{valcher1999state}.
		In \cite{wolff2022robust}, no process noise is considered. Instead of \eqref{eq:sys:x}, they deal with noise-corrupted versions of the state, i.e., $\tilde{x}(t) = x(t) + w(t)$ from the noise-free system
		$x(t + 1) = Ax(t) + Bu(t)$.
		In comparison, system \eqref{eq:sys:x} with process noise is more natural and challenging	 as discussed in \cite[Remark 5]{wolff2022robust}.
	\end{remark}
	
	Existing results on data-driven state estimation have focused on autonomous systems or systems with known inputs.
	For open-loop unstable systems, a stabilizing controller should be separately designed and equipped which complicates the architecture and implementation.
	To bypass this drawback, we are interested in addressing the joint state estimation and control problems of system \eqref{eq:sys} under Assumptions \ref{as:co}---\ref{as:noise}.
	To this end, some standard definitions on the desired performance are given.
	Dealing with noisy data, rather than exponential stability, we consider RGES and ISpS.
	The former is extended from \cite[Definition 9]{luo2016robust} and has been considered in \cite{wolff2022robust}.
	\begin{definition}[RGES of an estimator]\label{def:rges}
		Consider system \eqref{eq:sys} with bounded noise $w(t) \in \mathbb{B}_{\bar{w}}$ and $v(t) \in \mathbb{B}_{\bar{v}}$.
		A state estimator $\hat{x}(t)$ is RGES if there exist functions $\phi \in \mathcal{KL}$ and $\pi_{w}, \pi_v \in \mathcal{K}$\footnote{
			A function $\pi : [0,\infty) \rightarrow [0, \infty)$ is a $\mathcal{K}$-function if it is continuous and strictly increasing satisfying $\pi(0) = 0$.
			A function $\phi : [0,\infty)\times [0,\infty) \rightarrow [0, \infty)$ is a $\mathcal{KL}$-function if $\phi(\cdot, t)$ is of class $\mathcal{K}$ for each $t \in \mathbb{N}$ and $\phi(s, t)$ decreases to $0$ as $t \rightarrow \infty$ for any $s \in \mathbb{N}$.}
		such that for any $x(0), \hat{x}(0)\in \mathbb{R}^{n_x}$ and $w(t) \in \mathbb{B}_{\bar{v}}$, $v(t) \in \mathbb{B}_{\bar{v}}$, the following is satisfied for all $t\in \mathbb{N}$
		\begin{align}\label{eq:prges}
			\Vert x(t) - \hat{x}(t)\Vert &\le \phi(\Vert x(0) - \hat{x}(0)\Vert, t) \nonumber\\
			&\quad + \pi_w(\Vert w_t\Vert_{\infty})+ \pi_v(\Vert v_t\Vert_{\infty}).
		\end{align}
	\end{definition}
	
	The following definition of ISpS for discrete-time systems is adapted from \cite[Definition 2.2]{Z1994Small}.
	\begin{definition}[ISpS]
		\label{def:iss}
		A control system $x(t+1) = f(x(t), w(t))$ is called input-to-state practically stable (ISpS) if there exist $\mathcal{KL}$-function $\phi$, $\mathcal{K}$-function $\pi_w$, and constant $D_0$, such that for each $x(0) \in \mathbb{R}^{n_x}$ and each measurable essentially bounded input $w(t)$ on $ [0, +\infty)$, the solution $x(t)$ satisfies
		\begin{equation*}
			\Vert x(t)\Vert \le \phi(\Vert x(0)\Vert,t) + \pi_w(\Vert w_t\Vert_{\infty}) + D_0,\quad \forall t \in \mathbb{N}.
		\end{equation*}
		
	\end{definition}
	
	
	In this setting, we consider the following problem.
	\begin{problem}\label{problem0}
		Consider the system \eqref{eq:sys} under Assumptions \ref{as:co}---\ref{as:noise}.
		Design a data-based state estimator $\hat{x}(t)$ achieving RGES and a data-driven controller $u(t)$ ensuring system \eqref{eq:sys} ISpS.
	\end{problem}
	
	\subsection{Model-based SDP for LQG control}\label{sec:preliminaries:lqg} 
	
	When matrices $A$, $B$ and $C$ in \eqref{eq:sys} are known, Problem \ref{problem0} can be addressed by the LQG controller.
	In the following, we briefly review its model-based design, which plays an important role in deriving the main results of this paper.
	
	Under Assumption \ref{as:co}, a dynamic measurement feedback controller stabilizing system \eqref{eq:sys} exists and is given by \cite{OreillyJ}
	\begin{subequations}\label{eq:oc}
		\begin{align}
			\hat{x}(t + 1) &= (A + BK - LC)\hat{x}(t) + Ly(t)\\
			u(t) &= K\hat{x}(t)
		\end{align}
	\end{subequations}
	where matrices $K$ and $L$ are chosen such that $A + BK$ and $A - LC$ are Schur stable.
	The LQG controller which provides optimal control inputs while returning optimal state estimates is a special case of the controller \eqref{eq:oc}, when i) $w(t), v(t)$ are  mutually independent additive white Gaussian noise (AWGN) processes obeying $\mathbb{E}[w(t)w^\prime(t)] = N_x$ and $\mathbb{E}[v(t)v^\prime(t)] = N_y$ for all $t\in \mathbb{N}$, and ii) in addition to Schur stable, the gain matrices $K$ and $L$ are designed such that the quadratic cost is minimized. 
	
	%
	Precisely, considering weight matrices $W_x >  0$ and $W_u> 0$, matrix $K$ is chosen to minimize the cost function $J  = \lim_{T \rightarrow \infty} \mathbb{E}[(1/T)[\sum_{t = 0}^{T - 1}x(t) W_x x'(t) + u(t)W_u u'(t)]$, which is the classical stochastic LQR problem.
	This has been shown in \cite[Section 6.4]{chen2012optimal} equivalent to the deterministic $\mathcal{H}_2$ problem, i.e., finding $K$ such that $A + BK$ is Schur stable and ${\rm tr}(W_x P) + {\rm tr}(W_u KPK' )$ is minimized where $P$ is the controllability Gramian of system \eqref{eq:sys:x}.
	Such a controller gain matrix $K$ is unique and can be found as follows
	\begin{equation}\label{eq:lqr:K}
		\bar{K} = -(W_u + B'PB)^{-1}B'PA
	\end{equation}
	where $P>0$ is the unique solution to the following discrete-time algebraic Riccati (DARE) equation
	\begin{equation}\label{eq:ric:lqr}
		A'PA - P - A'PB(W_u + B'PB)^{-1}B'PA + W_x = 0.
	\end{equation}
	
	It is shown in \cite[Section 3.2]{depersis2021lowcomplexity} that the problem of finding $\bar{K}$ in \eqref{eq:lqr:K}
	can be equivalently posed as a semi-definite program
	\begin{align}\label{eq:lqr}
		& \min_{\gamma , K, P, G } \gamma  \\
		&~~~{\rm s.t.} ~\begin{cases}
			(A + BK)P(A + BK)^\prime - P + I \le 0\\
			P \ge I\\
			G  - KPK^\prime \ge 0\\
			{\rm tr}(W_x P) + {\rm tr}(W_u G ) \le \gamma .
		\end{cases}\nonumber
	\end{align}
	

	The matrix $L$ in the LQG controller is called the steady-state Kalman gain and is designed as follows.
	Let $e(t) = x(t) - \hat{x}(t)$ denote the estimation error  evolving as follows
	\begin{equation}\label{eq:est_error}
		e(t + 1) = (A - LC)e(t) + w(t) - Lv(t)
	\end{equation}
	and $L$ is chosen to minimize the trace of estimation error covariance matrix $\Sigma = \lim_{T \rightarrow \infty} \mathbb{E}[(1/T) \sum_{t = 0}^{T - 1}e(t)e'(t)]$, which is the classical Kalman filtering problem.
	According to \cite[Section 7.3]{goodwin2014adaptive}, the steady-state Kalman gain can be obtained as follows
	\begin{equation}\label{eq:lqg:L}
		\bar{L} = A \Sigma C'(C\Sigma C' + N_y)^{-1}
	\end{equation}
	where $\Sigma$ is the unique positive definite solution to the DARE equation
	\begin{equation}\label{eq:ric:lqg}
		A\Sigma A'  - \Sigma  + N_x - A\Sigma C'(C\Sigma C' + N_y)^{-1}C\Sigma A' = 0.
	\end{equation}
	It can be seen that, upon replacing $(A, B, W_x, W_u, K)$ by $(A', C', N_x, N_y, L')$, the DARE \eqref{eq:ric:lqg} boils down to \eqref{eq:ric:lqr}. 
	Therefore, the Kalman filtering problem can be formulated as the dual of LQR, i.e., finding $L$ such that  ${\rm tr}(N_x\Sigma ) + {\rm tr}(N_yL'\Sigma L)$ is minimized and $A - LC$ is Schur stable,
	where $\Sigma$ is the observability Gramian of   \eqref{eq:sys}.
	Hence, the problem of finding $\bar{L}$ in \eqref{eq:lqg:L} can be equivalently formulated as the following SDP
	\begin{align}\label{eq:lqg}
		& \min_{\epsilon, L, \Sigma , \Upsilon} \epsilon\\
		&~~~{\rm s.t.}~ \begin{cases}
			(A - LC)^\prime \Sigma (A - LC) - \Sigma  + I \le 0\\
			\Sigma  \ge I\\
			\Upsilon - L^\prime \Sigma  L \ge 0\\
			{\rm tr}(N_x\Sigma ) + {\rm tr}(N_y\Upsilon) \le \epsilon 
		\end{cases}\nonumber
	\end{align}
	which can be seen as the discrete-time counterpart of the Kalman filtering
	formulation for continuous-time systems in 
	\cite[Section 156.6]{matthew2021lmi}.
	
	In words, the model-based LQG  controller is expressed by
	\begin{subequations}\label{eq:oc:lqg}
		\begin{align}
			\hat{x}(t + 1) &= (A + B\bar{K} - \bar{L}C)\hat{x}(t) + \bar{L}y(t)\\
			u(t) &= \bar{K}\hat{x}(t)
		\end{align}
	\end{subequations}
	where the LQR controller gain $\bar{K}$ is given in \eqref{eq:lqr:K} and the steady-state Kalman gain $\bar{L}$ given in \eqref{eq:lqg:L}.
	The problem of finding the gain matrices can be converted into two deterministic $\mathcal{H}_2$ problems and addressed by solving SDPs \eqref{eq:lqr} and \eqref{eq:lqg}.
	Therefore, if we can redesign the controller \eqref{eq:oc:lqg} with matrices $\bar{K}$ and $\bar{L}$ by only using data, then Problem \ref{problem0} is solved.
	
	\subsection{Relevant data-driven results}
	\label{sec:preliminaries:dd}
	Data-based versions of the formulas above have been reported in the literature.
	For example, \cite[Section V]{van2020data} provided a data-based framework for representation and control of linear systems using dynamic output measurement feedback \eqref{eq:oc}.
	Data-driven LQR problem equivalent to \eqref{eq:lqr} was formulated in \cite{depersis2021lowcomplexity}.
	A brief review of related results is given as follows.
	
	Consider the $T$-long data $x_{[0,T - 1]}$, $x_{[1,T]}$, $u_{[0,T - 1]}$, $w_{[0,T - 1]}$, $y_{[0,T - 1]}$, $v_{[0,T - 1]}$ obtained from an open-loop experiment on system \eqref{eq:sys}.
	Rearrange these data into matrices, yielding
	\begin{subequations}
		\begin{align}\label{eq:data}
			& X_0 = [x(0)~x(1)~\cdots~x(T - 1)]\\
			& X_1 = [x(1)~x(2)~\cdots~x(T)]\\
			& U_0 = [u(0)~u(1)~\cdots~u(T - 1)]\\
			& W_0 = [w(0)~w(1)~\cdots~w(T - 1)]\\
			& Y_0 = [y(0)~y(1)~\cdots~y(T - 1)]\\
			& V_0 = [v(0)~v(1)~\cdots~v(T - 1)]
		\end{align}
	\end{subequations}
	which satisfy 
	\begin{equation}\label{eq:sys_data_repre:noise}
		\left[
		\begin{matrix}
			X_1\\Y_0
		\end{matrix}
		\right] = \left[
		\begin{matrix}
			A&B\\
			C&0
		\end{matrix}
		\right]\left[
		\begin{matrix}
			X_0\\
			U_0
		\end{matrix}
		\right] + 
		\left[
		\begin{matrix}
			W_0\\
			V_0
		\end{matrix}
		\right].
	\end{equation}
	When considering the noise-free case, i.e., $W_0 = 0$ and $V_0 = 0$, the equation reduces to
	\begin{equation}\label{eq:sys_data_repre:ideal}
		\left[
		\begin{matrix}
			X_1\\Y_0
		\end{matrix}
		\right] = \left[
		\begin{matrix}
			A&B\\
			C&0
		\end{matrix}
		\right]\left[
		\begin{matrix}
			X_0\\
			U_0
		\end{matrix}
		\right].
	\end{equation}
	Let $\Phi_0 := [X_0'~U_0']'$.
	A data-based version of controller \eqref{eq:oc} is constructed in \cite{van2020data}, which we outline as follows.
	\begin{theorem}
		[\!\!{\cite[Theorem 34]{van2020data}}]
		\label{thm:ddoc:ideal}
		Consider data $X_1$, $X_0$, $U_0$, $Y_0$ generated from system \eqref{eq:sys} with noise $w(t) =0$ and $v(t)= 0$, $\forall t \in \mathbb{N}$.
		Under Assumption \ref{as:co}, there exist matrices $\Phi_1^{\dag}$, $\Phi_2^{\dag}$, $K$, $L$ such that  $\Phi_0 [\Phi_1^{\dag}~\Phi_2^{\dag}]=\Phi_0\Phi_0^{\dag}= I$, and both $X_1(\Phi_1^{\dag}+\Phi_2^{\dag}K)$  and $(X_1 - LY_0)\Phi_1^{\dag}$ are Schur stable
		if and only if the following holds
		\begin{equation}\label{eq:rank:phi}
			{\rm rank}(\Phi_0) = n_x + n_u.
		\end{equation}
		A stabilizing measurement feedback controller of the form \eqref{eq:oc} is given by 
		\begin{subequations}\label{eq:ddoc:i}
			\begin{align}
				\hat{x}(t \!+\! 1) &\!=\! (X_1 \Phi_1^\dag \!+\! X_1 \Phi_2^\dag K \!-\! L Y_0\Phi_1^\dag)\hat{x}(t) \!+\! Ly(t)\label{eq:ddoc:i:est}\\
				u(t) &= K\hat{x}(t).\label{eq:ddoc:i:ctrl}
			\end{align}
		\end{subequations}
	\end{theorem}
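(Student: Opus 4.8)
The plan is to exploit the fact that, in the noise-free setting \eqref{eq:sys_data_repre:ideal}, the data-based matrices appearing in the statement are \emph{algebraically identical} to their model-based counterparts $A+BK$ and $A-LC$; once this is established, both the rank equivalence and the stabilization claim reduce to standard linear-systems facts. The two identities I would first extract from \eqref{eq:sys_data_repre:ideal} are $X_1 = [A~B]\,\Phi_0$ and $Y_0 = [C~0]\,\Phi_0 = CX_0$, keeping in mind that $\Phi_0 = [X_0'~U_0']'$ has exactly $n_x+n_u$ rows.

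For the ``if'' direction I would argue as follows. Assuming ${\rm rank}(\Phi_0) = n_x+n_u$, the matrix $\Phi_0$ has full row rank, hence admits a right inverse; take $\Phi_0^\dag := \Phi_0'(\Phi_0\Phi_0')^{-1}$, so that $\Phi_0\Phi_0^\dag = I$, and partition it conformably with the block rows of $\Phi_0$ as $\Phi_0^\dag = [\Phi_1^\dag~\Phi_2^\dag]$ with $\Phi_1^\dag\in\mathbb{R}^{T\times n_x}$ and $\Phi_2^\dag\in\mathbb{R}^{T\times n_u}$. The key step is to read off from $\Phi_0\Phi_0^\dag = I$ that $\Phi_0\Phi_1^\dag = [I_{n_x}~0]'$ and $\Phi_0\Phi_2^\dag = [0~I_{n_u}]'$; substituting the two data identities then yields
\begin{equation*}
X_1\Phi_1^\dag = A,\qquad X_1\Phi_2^\dag = B,\qquad Y_0\Phi_1^\dag = C,
\end{equation*}
so that $X_1(\Phi_1^\dag+\Phi_2^\dag K) = A+BK$ and $(X_1 - LY_0)\Phi_1^\dag = A - LC$ for every $K$ and $L$. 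Under Assumption \ref{as:co}, controllability of $(A,B)$ and observability of $(A,C)$ furnish, by pole placement, a $K$ with $A+BK$ Schur and an $L$ with $A-LC$ Schur, which makes both data-based matrices Schur.

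For the ``only if'' direction I would simply note that the existence of matrices with $\Phi_0[\Phi_1^\dag~\Phi_2^\dag] = I_{n_x+n_u}$ already exhibits a right inverse of $\Phi_0$, forcing ${\rm rank}(\Phi_0)\ge n_x+n_u$, hence equality since $\Phi_0$ has exactly $n_x+n_u$ rows. The stabilization claim for \eqref{eq:ddoc:i} is then immediate: with the $K$, $L$ above, controller \eqref{eq:ddoc:i} coincides term-by-term with the model-based dynamic measurement-feedback controller \eqref{eq:oc}, whose closed loop, written in the coordinates $(x,\,x-\hat{x})$, is governed by a block upper-triangular matrix with diagonal blocks $A+BK$ and $A-LC$; its spectrum is therefore the union of those of $A+BK$ and $A-LC$ and lies in the open unit disc, so the closed loop is Schur stable.

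The whole argument is essentially bookkeeping; the one place that warrants care is the partitioned right-inverse step, i.e., verifying that a right inverse of the stacked regressor $\Phi_0$ splits into blocks $\Phi_1^\dag$, $\Phi_2^\dag$ with the stated products $\Phi_0\Phi_1^\dag$ and $\Phi_0\Phi_2^\dag$, since this is exactly what collapses the data-based expressions onto $A+BK$ and $A-LC$. I do not expect any deeper obstacle.
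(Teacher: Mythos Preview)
Your proposal is correct. Note, however, that the paper does not actually prove this theorem: it is quoted verbatim as \cite[Theorem~34]{van2020data} and stated without proof, so there is no in-paper argument to compare against. Your derivation---using the noise-free identity \eqref{eq:sys_data_repre:ideal} together with a column-partitioned right inverse of $\Phi_0$ to recover $A=X_1\Phi_1^\dag$, $B=X_1\Phi_2^\dag$, $C=Y_0\Phi_1^\dag$, and then invoking pole placement under Assumption~\ref{as:co}---is exactly the standard route (and is the same reduction the paper itself uses later in \eqref{eq:abcddata}); the ``only if'' direction via the mere existence of a right inverse is likewise the expected one-line argument.
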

	The condition \eqref{eq:rank:phi} is critical for guaranteeing existence of the controller \eqref{eq:ddoc:i}.
	When there is no noise in input-state data, i.e., $W_0 = 0$,  condition \eqref{eq:rank:phi} can be met by using a sufficiently exciting input sequence $u_{[0,T - 1]}$.
	\begin{lemma}[\!\!{\cite[Corollary 2]{willems2005note}}]\label{lem:pe}
		For controllable system \eqref{eq:sys} with $w(t) = 0$ for all $t \in \mathbb{N}$,
		if $u_{[0,T - 1]}$ is persistently exciting of order $n_x + 1$, then condition \eqref{eq:rank:phi} holds.
	\end{lemma}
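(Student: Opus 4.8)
The statement coincides with \cite[Corollary 2]{willems2005note}, so formally a proof is just a citation; I nonetheless outline a self-contained argument. Since $\Phi_0 = [X_0'~U_0']'$ has exactly $n_x+n_u$ rows, it suffices to prove the \emph{lower} bound ${\rm rank}(\Phi_0)\ge n_x+n_u$, i.e.\ that $\Phi_0$ admits no nonzero left annihilator. The plan is a contradiction argument: suppose $[\eta'~\xi']\neq 0$ (with $\eta\in\mathbb{R}^{n_x}$, $\xi\in\mathbb{R}^{n_u}$) satisfies $\eta'x(t)+\xi'u(t)=0$ for all $t\in\{0,\dots,T-1\}$, and extract from it a nontrivial linear dependence among the rows of $H_{n_x+1}(u_{[0,T-1]})$, which contradicts persistency of excitation of order $n_x+1$ (Definition~\ref{def:pe}).

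The core step is a propagation argument based on the noise-free dynamics $x(t+1)=Ax(t)+Bu(t)$. By induction on $k$, I would show that for each $k\ge 0$ there exist fixed row vectors $M_{k,0},\dots,M_{k,k}$ with $\eta'A^k x(t)=\sum_{j=0}^{k}M_{k,j}u(t+j)$ on the progressively shortened window $t\in\{0,\dots,T-1-k\}$; the inductive step is immediate from $\eta'A^{k+1}x(t)=\eta'A^k x(t+1)-\eta'A^kB\,u(t)$, and bookkeeping yields the leading coefficient $M_{k,k}=-\xi'$ for every $k$. Applying the Cayley--Hamilton identity $\eta'A^{n_x}=\sum_{i=0}^{n_x-1}c_i\,\eta'A^i$ (the $c_i$ being the characteristic-polynomial coefficients of $A$) to $x(t)$ and substituting these expressions collapses everything to a relation $\sum_{j=0}^{n_x}N_j u(t+j)=0$ holding for every $t$ in the data window, i.e.\ $[N_0~\cdots~N_{n_x}]\,H_{n_x+1}(u_{[0,T-1]})=0$, with leading block $N_{n_x}=-\xi'$.

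It then remains to show the coefficient vector $[N_0~\cdots~N_{n_x}]$ is nonzero. If $\xi\neq 0$ this is immediate since $N_{n_x}=-\xi'\neq 0$, so $H_{n_x+1}(u_{[0,T-1]})$ is row-rank deficient, contradicting persistency of excitation. The remaining case $\xi=0$ (hence $\eta\neq 0$) is precisely where Assumption~\ref{as:co} enters: if in addition $[N_0~\cdots~N_{n_x}]=0$, then unwinding the recursion (with $\xi=0$ one has $M_{k,j}=-\eta'A^{k-j-1}B$ for $j<k$) and reading off $N_{n_x-1},N_{n_x-2},\dots,N_0$ being zero in turn forces $\eta'B=\eta'AB=\cdots=\eta'A^{n_x-1}B=0$, i.e.\ $\eta'[B~AB~\cdots~A^{n_x-1}B]=0$; controllability makes this matrix full row rank, so $\eta=0$, a contradiction. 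Either branch yields a contradiction, hence $\Phi_0$ has full row rank $n_x+n_u$ and \eqref{eq:rank:phi} holds. I expect the main obstacle to be organizational rather than conceptual: tracking the index windows consistently through the induction and Cayley--Hamilton collapse, and isolating the degenerate sub-case so that controllability is invoked exactly where a rich input alone cannot preclude rank deficiency; all the $T$-dependent sub-windows involved stay nonempty because the persistency-of-excitation hypothesis already forces $T$ to be sufficiently large.
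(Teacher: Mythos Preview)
The paper does not supply a proof of this lemma at all; it is stated purely as a citation of \cite[Corollary~2]{willems2005note}. Your self-contained argument is correct and is essentially the standard proof of this consequence of the fundamental lemma: assume a nonzero left annihilator of $\Phi_0$, propagate it through the noise-free dynamics to express $\eta'A^kx(t)$ in terms of $u(t),\dots,u(t+k)$, collapse via Cayley--Hamilton to a linear dependence among the rows of $H_{n_x+1}(u_{[0,T-1]})$, and use controllability to rule out the degenerate $\xi=0$ branch. The bookkeeping you outline (in particular $M_{k,k}=-\xi'$, $M_{k,j}=-\eta'A^{k-j-1}B$ for $j<k$ when $\xi=0$, and the triangular unwinding $N_{n_x-1}=0\Rightarrow\eta'B=0$, $N_{n_x-2}=0\Rightarrow\eta'AB=0$, etc.) checks out. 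So there is nothing to compare against in the paper itself; your proposal simply fills in what the paper leaves to the reference.
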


	
	Having formulated the data-based version of controller \eqref{eq:oc}, it can be observed that if the gain matrices $\bar{K}$ in \eqref{eq:lqr:K} and $\bar{L}$ in $\eqref{eq:lqg:L}$ can also be derived using data, then a data-based LQG controller is constructed.
	
	Previous works have addressed the data-based LQR problem in both noise-free data \cite{persis2020data} and noisy data \cite{depersis2021lowcomplexity} scenarios.
	Specifically, if $W_0 = 0$, SDP \eqref{eq:lqr} can be equivalently reformulated using data.
	\begin{theorem}[\!{\cite[Theorem 1]{depersis2021lowcomplexity}}]\label{thm:sdp_lti_ideal}
		Let $U_0$, $X_0$ and $X_1$ be data collected from an experiment on system \eqref{eq:sys} with $W_0 = 0$ using a persistently exciting input sequence of order $n_x + 1$.
		Then the condition \eqref{eq:rank:phi} holds and the SDP 
		\begin{align}\label{eq:lqr_ideal}
			& \min_{\gamma , Q, P, G } \gamma \\
			&~~~  {\rm s.t.}~ \begin{cases}
				X_1 QP^{-1}Q^\prime X_1^\prime - P + I \le 0\\
				P \ge I\\
				G  -  U_0QP^{-1}Q^\prime U_0^\prime \ge 0\\
				X_0Q = P\\
				{\rm tr}(W_x P) + {\rm tr}(W_u G ) \le \gamma 
			\end{cases}\nonumber
		\end{align}
		is feasible.
		Moreover, let $(\gamma_{*}, Q_{*}, P_*,G_{*})$ be any optimal solution of \eqref{eq:lqr_ideal}, and $K_* = U_0Q_*P_*^{-1}$ recovers the LQR controller gain in \eqref{eq:lqr:K}, i.e., $\bar{K} = K_*$.
	\end{theorem}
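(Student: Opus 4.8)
The plan is to show that \eqref{eq:lqr_ideal} is a lossless data-based reparametrization of the model-based SDP \eqref{eq:lqr}: I will exhibit a correspondence between the two feasible sets under which the objective is preserved, so that the optimal values coincide and the minimizers map into one another.

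First I would invoke Lemma~\ref{lem:pe}: under Assumption~\ref{as:co} and with $u_{[0,T-1]}$ persistently exciting of order $n_x+1$ and $W_0=0$, condition \eqref{eq:rank:phi} holds, i.e. $\Phi_0$ has full row rank $n_x+n_u$, so a right inverse $\Phi_0^\dag$ exists and the equation $\Phi_0 Q = M$ is solvable for any $M$ with $n_x+n_u$ rows. Because $W_0=0$, the data obey \eqref{eq:sys_data_repre:ideal}, whence $X_1=[A~B]\Phi_0$ and therefore $X_1Q=[A~B]\Phi_0Q$ for every matrix $Q$ of compatible size; this single identity is the workhorse of the proof.

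Next I would set up the two-way correspondence. \emph{From \eqref{eq:lqr} to \eqref{eq:lqr_ideal}}: given a feasible tuple $(\gamma,K,P,G)$, the full row rank of $\Phi_0$ lets me choose $Q$ with $X_0Q=P$ and $U_0Q=KP$. Then $X_1Q=[A~B]\Phi_0Q=AP+BKP=(A+BK)P$, so $X_1QP^{-1}Q'X_1'=(A+BK)P(A+BK)'$ and $U_0QP^{-1}Q'U_0'=KPK'$; substituting into the four LMI/equality constraints of \eqref{eq:lqr_ideal} shows $(\gamma,Q,P,G)$ is feasible there with the same objective value. \emph{From \eqref{eq:lqr_ideal} to \eqref{eq:lqr}}: given a feasible $(\gamma,Q,P,G)$, note $P\ge I$ makes $P$ invertible, and set $K:=U_0QP^{-1}$. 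Using the equality constraint $X_0Q=P$, I get $X_1Q=[A~B]\Phi_0Q=AP+BU_0Q=(A+BK)P$, hence $A+BK=X_1QP^{-1}$ and $X_1QP^{-1}Q'X_1'=(A+BK)P(A+BK)'$, so the first LMI of \eqref{eq:lqr_ideal} becomes $(A+BK)P(A+BK)'-P+I\le0$, which jointly with $P\ge I$ also certifies Schur stability of $A+BK$; likewise $U_0QP^{-1}Q'U_0'=KPK'$, so $(\gamma,K,P,G)$ is feasible for \eqref{eq:lqr} with the same value. Since \eqref{eq:lqr} is feasible — it is the known deterministic $\mathcal H_2$/LQR SDP, feasible under controllability, cf. \cite{depersis2021lowcomplexity} — feasibility of \eqref{eq:lqr_ideal} follows at once.

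Finally, because the correspondence preserves $\gamma$ and sends $K$ to $U_0QP^{-1}$, the two SDPs share the same optimal value and, for any optimizer $(\gamma_*,Q_*,P_*,G_*)$ of \eqref{eq:lqr_ideal}, the tuple $(\gamma_*,U_0Q_*P_*^{-1},P_*,G_*)$ is optimal for \eqref{eq:lqr}. Since \eqref{eq:lqr} is equivalent to the deterministic $\mathcal H_2$ problem whose unique optimal gain is $\bar K$ in \eqref{eq:lqr:K}, we conclude $K_*=U_0Q_*P_*^{-1}=\bar K$. The step I expect to be the main obstacle is the direction from \eqref{eq:lqr_ideal} to \eqref{eq:lqr} — certifying that an arbitrary data-feasible point yields a genuine model-feasible point — since it relies precisely on combining the noise-free data equation \eqref{eq:sys_data_repre:ideal} with the equality constraint $X_0Q=P$ to recover $A+BK=X_1QP^{-1}$, and on $P\ge I$ for invertibility; dropping the equality constraint would break the losslessness of the reparametrization.
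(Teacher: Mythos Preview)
Your proof is correct. Note, however, that the paper does not supply its own proof of this theorem: it is quoted verbatim as \cite[Theorem~1]{depersis2021lowcomplexity}. Your two-way correspondence between the feasible sets of \eqref{eq:lqr} and \eqref{eq:lqr_ideal}, hinging on the noise-free identity $X_1Q=[A~B]\Phi_0Q$ together with the equality constraint $X_0Q=P$, is precisely the standard argument and mirrors the technique the paper \emph{does} use when proving the dual result, Theorem~\ref{thm:lqg:ideal}, for the Kalman SDP.
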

	When $W_0 \ne 0$, Lemma \ref{lem:pe} does not hold in general and both the feasibility of SDP \eqref{eq:lqr_ideal} and the optimality of the controller gain are deteriorated.
	To mitigate this issue, a modified SDP was proposed in \cite[Section 5]{depersis2021lowcomplexity}, which introduces a soft constraint to balance between robustness against noise and optimality.
	\begin{theorem}[\!\!{\cite[Section 5]{depersis2021lowcomplexity}}]\label{thm:sdp_lti_noise}
		Let $U_0$, $X_0$, $X_1$ and $W_0$ be data collected from an experiment on system \eqref{eq:sys} using a persistently exciting input sequence of order $n_x + 1$.
		Consider the SDP 
		\begin{align}\label{eq:lqr_noise}
			& \min_{\gamma , Q, P, G , M_1} \gamma \\
			&~~~~~{\rm s.t.}~ ~~\begin{cases}
				X_1 QP^{-1}Q^\prime X_1^\prime - P + I \le 0\\
				P \ge I\\
				G  -  U_0QP^{-1}Q^\prime U_0^\prime \ge 0\\
				M_1 - QP^{-1}Q^\prime \ge 0\\
				X_0Q = P\\
				{\rm tr}(W_x P) + {\rm tr}(W_u G ) + \alpha_1{
					\rm tr}(M_1) \le \gamma  
			\end{cases}\nonumber
		\end{align}
		where $\alpha_1 > 0$ is arbitrary. 
		Then there exists some $\delta_r >0$ such that if $\Vert W_0\Vert \le \delta_r$, then the condition \eqref{eq:rank:phi} is satisfied and SDP \eqref{eq:lqr_noise} is feasible.
		In addition, letting $(\tilde{\gamma}, \tilde{Q}, \tilde{P}, \tilde{G},\tilde{M}_1)$ be any optimal solution of SDP \eqref{eq:lqr_noise},
		then $u(k) = \tilde{K} x(k)$ with $\tilde{K} = U_0 \tilde{Q} \tilde{P}^{-1}$ is stabilizing.
	\end{theorem}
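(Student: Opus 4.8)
The plan is to argue in three stages, paralleling the noise-free development of Theorem~\ref{thm:sdp_lti_ideal} but now made robust by the soft constraint: (i) the rank condition \eqref{eq:rank:phi} persists when $\|W_0\|$ is small; (ii) SDP \eqref{eq:lqr_noise} admits a feasible point whose cost is bounded by a constant that does not depend on $W_0$; and (iii) that uniform bound, together with the penalty $\alpha_1{\rm tr}(M_1)$, forces \emph{every} optimizer to yield a stabilizing $\tilde{K}$. The final $\delta_r$ is the smallest of the three thresholds produced along the way.

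\emph{Stage~(i): the rank condition.} For a fixed input $u_{[0,T-1]}$ the state samples depend affinely on the noise, so $\Phi_0$ equals the matrix $\Phi_0^{\circ}$ that the same experiment would produce with $w(t)\equiv 0$, plus a perturbation of spectral norm at most $c_0\|W_0\|$, where $c_0$ depends only on $(A,B)$ and $T$ through the associated convolution structure. Applying Lemma~\ref{lem:pe} to that hypothetical noise-free experiment gives ${\rm rank}(\Phi_0^{\circ}) = n_x+n_u$, hence $\underline{\lambda}_{\Phi_0^{\circ}}>0$; by the perturbation bound for singular values, $\Phi_0$ keeps full row rank whenever $c_0\|W_0\|<\underline{\lambda}_{\Phi_0^{\circ}}$, which defines the first threshold $\delta_r^{(1)}$. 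In that regime $\Phi_0\Phi_0^{\dag}=I$ and $\Phi_0^{\dag}$ varies continuously with $W_0$.

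\emph{Stage~(ii): feasibility with a noise-independent cost.} Under Assumption~\ref{as:co} fix $K$ with $A+BK$ Schur and let $P_0\ge I$ solve the Lyapunov equation $(A+BK)P_0(A+BK)'-P_0+2I=0$, so the nominal residual of the first LMI equals $-I$. Put $Q := \Phi_0^{\dag}[\,P_0'\ \ (KP_0)'\,]'$. Since $\Phi_0$ is full row rank, $X_0Q=P_0$ and $U_0Q=KP_0$ hold exactly, and \eqref{eq:sys_data_repre:noise} gives $X_1Q=(A+BK)P_0+W_0Q$, whence $X_1QP_0^{-1}Q'X_1' = (A+BK)P_0(A+BK)'+R$ with $\|R\|\le c_1\|W_0\|$ for small $\|W_0\|$ (as $Q$ stays bounded there and $\|P_0^{-1}\|\le 1$). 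Hence the first LMI of \eqref{eq:lqr_noise} holds once $c_1\|W_0\|\le 1$; choosing $G=U_0QP_0^{-1}Q'U_0'$ and $M_1=QP_0^{-1}Q'$ satisfies the two inequality constraints with equality, the equality $X_0Q=P_0$ holds, and picking $\gamma$ equal to the resulting objective value gives a feasible tuple. Because $P_0$ is independent of $W_0$ and $Q$ is uniformly bounded for $\|W_0\|\le\delta_r^{(1)}$, the cost of this tuple is at most a constant $\bar{\gamma}$ independent of $W_0$; thus for $\|W_0\|$ below a second threshold $\delta_r^{(2)}$ the SDP is feasible and its optimal value obeys $\tilde{\gamma}\le\bar{\gamma}$.

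\emph{Stage~(iii): stabilization --- the crux.} Let $(\tilde{\gamma},\tilde{Q},\tilde{P},\tilde{G},\tilde{M}_1)$ be any optimizer and set $\tilde{K}:=U_0\tilde{Q}\tilde{P}^{-1}$. From $X_0\tilde{Q}=\tilde{P}$, $U_0\tilde{Q}=\tilde{K}\tilde{P}$ and \eqref{eq:sys_data_repre:noise},
\begin{equation*}
\mathcal{A} := X_1\tilde{Q}\tilde{P}^{-1} = (A+B\tilde{K})+\Delta,\qquad \Delta := W_0\tilde{Q}\tilde{P}^{-1},
\end{equation*}
and the first LMI reads $\mathcal{A}\tilde{P}\mathcal{A}'-\tilde{P}+I\le 0$, so $\tilde{P}$ certifies $\mathcal{A}$ Schur. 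The point is to bound $\Delta$ independently of the unknown optimizer. Since $W_x>0$, $\tilde{P}\ge I$ and $\tilde{G}\ge 0$, the last inequality of \eqref{eq:lqr_noise} yields $\underline{\lambda}_{W_x}\,{\rm tr}(\tilde{P})\le\tilde{\gamma}\le\bar{\gamma}$ and $\alpha_1{\rm tr}(\tilde{M}_1)\le\bar{\gamma}$; combining with $\tilde{M}_1\ge\tilde{Q}\tilde{P}^{-1}\tilde{Q}'\ge 0$ and $\|\tilde{P}^{-1/2}\|\le 1$ gives $\|\tilde{P}\|\le\bar{\gamma}/\underline{\lambda}_{W_x}$ and $\|\Delta\|^2\le\|W_0\|^2\,{\rm tr}(\tilde{Q}\tilde{P}^{-1}\tilde{Q}')\le\|W_0\|^2\bar{\gamma}/\alpha_1$, while $\mathcal{A}\mathcal{A}'\le\mathcal{A}\tilde{P}\mathcal{A}'\le\tilde{P}$ gives $\|\mathcal{A}\|^2\le\|\tilde{P}\|$. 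Using $\tilde{P}$ as a common Lyapunov function for $A+B\tilde{K}=\mathcal{A}-\Delta$,
\begin{equation*}
(A+B\tilde{K})\tilde{P}(A+B\tilde{K})'-\tilde{P}\le\big(-1+2\|\mathcal{A}\|\,\|\tilde{P}\|\,\|\Delta\|+\|\tilde{P}\|\,\|\Delta\|^2\big)I,
\end{equation*}
and since every coefficient on the right is uniformly bounded while $\|\Delta\|=O(\|W_0\|)$, the right-hand side is $\le-\frac{1}{2}I<0$ for $\|W_0\|$ below a third threshold $\delta_r^{(3)}$. With $\tilde{P}\ge I>0$ this certifies $A+B\tilde{K}$ Schur, so $u(k)=\tilde{K}x(k)$ is stabilizing; set $\delta_r=\min\{\delta_r^{(1)},\delta_r^{(2)},\delta_r^{(3)}\}$.

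I expect Stage~(iii) to be the real obstacle: $\Delta=W_0\tilde{Q}\tilde{P}^{-1}$ cannot be controlled head-on because $\tilde{Q}$ and $\tilde{P}$ come from an a priori unknown optimizer, so the estimate must be routed through the two structural devices built into \eqref{eq:lqr_noise} --- the equality $X_0Q=P$, which turns the purely data-based LMI into a model-based Lyapunov inequality perturbed by exactly $W_0QP^{-1}$, and the penalty $\alpha_1{\rm tr}(M_1)$ with $M_1\ge QP^{-1}Q'$, which (once Stage~(ii) supplies the noise-independent ceiling $\tilde{\gamma}\le\bar{\gamma}$) caps $\|\tilde{Q}\tilde{P}^{-1/2}\|$ uniformly. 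Securing that noise-free constant $\bar{\gamma}$ in Stage~(ii) --- a feasible point whose cost does not blow up as the data vary --- is the quiet prerequisite I would be most careful to pin down.
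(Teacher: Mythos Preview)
The paper does not prove this theorem: it is quoted as a known result from \cite[Section~5]{depersis2021lowcomplexity} and serves only as background for the paper's dual observer-side development in Section~IV, so there is no in-paper proof to compare against. Your three-stage argument --- rank persistence via singular-value perturbation of $\Phi_0$, feasibility with a noise-independent cost ceiling $\bar{\gamma}$, and uniform control of $\Delta=W_0\tilde{Q}\tilde{P}^{-1}$ through $\alpha_1\,{\rm tr}(\tilde{M}_1)\le\bar{\gamma}$ with $\tilde{M}_1\ge\tilde{Q}\tilde{P}^{-1}\tilde{Q}'$ --- is correct and is exactly the mechanism of the cited reference; the paper's own Theorems~\ref{thm:lqg:i-n} and~\ref{thm:lqg_noise} (whose proofs are likewise deferred to \cite{depersis2021lowcomplexity}) run on the same template on the observer side.
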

	
	Bearing in mind the results in Subsections \ref{sec:preliminaries:lqg} and \ref{sec:preliminaries:dd}, Problem \ref{problem0} can be rephrased as follows.
	\begin{problem}\label{problem}
		For system \eqref{eq:sys} obeying Assumptions \ref{as:co}---\ref{as:noise},
		design a data-based observer and controller to fulfill the  LQG controller \eqref{eq:oc:lqg}, which achieves RGES for the data-driven state estimator and ISpS for the system.
	\end{problem}
	
	We will focus on Problem \ref{problem} in the rest of the paper.
	
	\section{Data-based LQG Control Using Noise-free Data}
	\label{sec:lqg:i}
	
	In this section, we address Problem \ref{problem} by considering that noise-free input-state-output data can be collected offline.
	\begin{assumption}[{Noise-free offline data}]
		\label{as:noise-free}
		Let $X_1$, $X_0$, $U_0$ and $Y_0$ be data collected from an offline experiment on system \eqref{eq:sys} in the absence of noise, i.e., $W_0 = 0$ and $V_0 = 0$.
	\end{assumption} 
	
	Under this assumption and invoking Theorem \ref{thm:sdp_lti_ideal}, the matrix $\bar{K}$ in \eqref{eq:lqr:K} can be obtained by resorting to the optimal solution of SDP \eqref{eq:lqr_ideal}.
	Building on this result, a possible solution for Problem \ref{problem} reduces to deriving a data-based version of  $\bar{L}$ in \eqref{eq:lqg:L}.
	Thus, this section proceeds by first constructing a steady-state Kalman gain $\bar{L}$ based on data.
	Substituting the data-based formulations for designing the
	gain matrices $\bar{K}$ and $\bar{L}$ into \eqref{eq:ddoc:i}, we show that  \eqref{eq:ddoc:i} is equivalent to \eqref{eq:oc:lqg}.
	Finally, theoretical guarantees on RGES and ISpS are provided.
	%
	
	Noticing from Assumption \ref{as:iso-io}, input sequence $u_{[0,T - 1]}$ is persistently exciting of order $n_x + 1$.
	According to Lemma \ref{lem:pe}, the condition \eqref{eq:rank:phi} holds and there exists  matrix $\Phi_0^{\dag} := [\Phi_1^\dag~\Phi_2^\dag]$ such that $\Phi_0 \Phi_0^{\dag} = I$.
	In addition, under Assumption \ref{as:noise-free}, matrices $X_1$, $X_0$, $U_0$ and $Y_0$ adhere to \eqref{eq:sys_data_repre:ideal}. 
	Hence, post-multiplying  both sides of  \eqref{eq:sys_data_repre:ideal} by $\Phi_0^{\dag}$ returns
	\begin{equation}\label{eq:abcddata}
		A = X_1 \Phi_1^\dag,~ B = X_1\Phi_2^\dag,~ C = Y_0\Phi_1^\dag.
	\end{equation}
	Substituting \eqref{eq:abcddata} into \eqref{eq:lqg}, we arrive at the following SDP
	\begin{align}\label{eq:lqg_ideal}
		& \min_{\epsilon, \Pi, \Sigma , \Upsilon} \epsilon\\
		&~~~{\rm s.t.}~ \begin{cases}
			\Phi_1^{\dag \prime}(\Sigma X_1 + \Pi Y_0)^\prime \Sigma ^{-1}(\Sigma X_1 + \Pi Y_0)\Phi_1^\dag - \Sigma  + I \le 0\\
			\Sigma  \ge I\\
			\Upsilon - \Pi^\prime \Sigma ^{-1}\Pi \ge 0\\
			{\rm tr}(N_x\Sigma ) + {\rm tr}(N_y\Upsilon) \le \epsilon .
		\end{cases}\nonumber
	\end{align}
	The equivalence between SDPs \eqref{eq:lqg} and \eqref{eq:lqg_ideal} is formally stated as follows. 
	\begin{theorem}
		[Equivalence between \eqref{eq:lqg} and \eqref{eq:lqg_ideal}]
		\label{thm:lqg:ideal}
		Under Assumptions \ref{as:co}, \ref{as:iso-io} and \ref{as:noise-free}, SDP \eqref{eq:lqg_ideal} is feasible.
		Moreover, any optimal solution $(\epsilon_*, \Pi_*, \Sigma_*, \Upsilon_*)$ of \eqref{eq:lqg_ideal} is such that $L_* = \Sigma_*^{-1}\Pi_*$ is equivalent to the optimal solution $\bar{L}$ obtained from \eqref{eq:lqg}.
	\end{theorem}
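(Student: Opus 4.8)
The plan is to set up an explicit, objective-preserving bijection between the feasible points of the model-based Kalman SDP \eqref{eq:lqg} and those of the data-based SDP \eqref{eq:lqg_ideal}, and then read off feasibility and optimality from this correspondence. The key device is a $\Sigma$-dependent change of variables $\Pi = \Sigma L$ together with the data identities \eqref{eq:abcddata}.

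First I would record the data identities needed. By Assumption \ref{as:iso-io}, $u_{[0,T-1]}$ is persistently exciting of order $n_x+1$, so Lemma \ref{lem:pe} gives the rank condition \eqref{eq:rank:phi}; hence a right inverse $\Phi_0^\dagger=[\Phi_1^\dagger~\Phi_2^\dagger]$ with $\Phi_0\Phi_0^\dagger=I$ exists, and under Assumption \ref{as:noise-free} relation \eqref{eq:sys_data_repre:ideal} holds, so post-multiplying it by $\Phi_0^\dagger$ yields \eqref{eq:abcddata}, i.e.\ $A=X_1\Phi_1^\dagger$, $B=X_1\Phi_2^\dagger$, $C=Y_0\Phi_1^\dagger$. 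In particular $A-LC=(X_1-LY_0)\Phi_1^\dagger$ for every $L$, which is the substitution that turns the model-based constraints into data-based ones.

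Next I would carry out the change of variables. On the feasible set $\Sigma\ge I\succ0$, so $\Sigma$ is invertible and, for fixed $\Sigma$, the linear map $L\mapsto\Pi:=\Sigma L$ is a bijection of $\mathbb{R}^{n_x\times n_y}$ onto itself with inverse $\Pi\mapsto\Sigma^{-1}\Pi$. I would then verify constraint by constraint that, under the joint substitution of \eqref{eq:abcddata} and $\Pi=\Sigma L$, each line of \eqref{eq:lqg} becomes the corresponding line of \eqref{eq:lqg_ideal}: the normalization $\Sigma\ge I$ and the objective $\mathrm{tr}(N_x\Sigma)+\mathrm{tr}(N_y\Upsilon)\le\epsilon$ are unchanged; the third constraint transforms via the identity $L'\Sigma L=(\Sigma^{-1}\Pi)'\Sigma(\Sigma^{-1}\Pi)=\Pi'\Sigma^{-1}\Pi$; and the first constraint, after the congruence by $\Phi_1^\dagger$ produced by $A-LC=(X_1-LY_0)\Phi_1^\dagger$ and after inserting $\Sigma\Sigma^{-1}$, turns the quadratic term $(X_1-LY_0)'\Sigma(X_1-LY_0)$ into $(\Sigma X_1+\Pi Y_0)'\Sigma^{-1}(\Sigma X_1+\Pi Y_0)$, reproducing the first line of \eqref{eq:lqg_ideal}. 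Consequently, $(\epsilon,L,\Sigma,\Upsilon)$ is feasible for \eqref{eq:lqg} if and only if $(\epsilon,\Sigma L,\Sigma,\Upsilon)$ is feasible for \eqref{eq:lqg_ideal}, with the same value of $\epsilon$.

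Finally I would transfer feasibility and optimality across the bijection. Under Assumption \ref{as:co} the pair $(A,C)$ is observable, so the Kalman DARE \eqref{eq:ric:lqg} admits a unique positive-definite solution and, as recalled around \eqref{eq:lqg}, the model-based SDP \eqref{eq:lqg} is feasible and is an exact reformulation of the problem of computing $\bar L$ in \eqref{eq:lqg:L}; the bijection then makes \eqref{eq:lqg_ideal} feasible with the same optimal value. Since the map sends optimal points to optimal points in both directions, any optimizer $(\epsilon_*,\Pi_*,\Sigma_*,\Upsilon_*)$ of \eqref{eq:lqg_ideal} corresponds to the optimizer $(\epsilon_*,\Sigma_*^{-1}\Pi_*,\Sigma_*,\Upsilon_*)$ of \eqref{eq:lqg}; and because the $L$-block of every optimizer of \eqref{eq:lqg} equals the unique steady-state Kalman gain $\bar L$ — the dual of the statement for \eqref{eq:lqr} in \cite{depersis2021lowcomplexity} and Theorem \ref{thm:sdp_lti_ideal} under the substitution $(A,B,W_x,W_u,K)\leftrightarrow(A',C',N_x,N_y,L')$ — we obtain $L_*=\Sigma_*^{-1}\Pi_*=\bar L$. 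The hard part will be the constraint-by-constraint verification in the change-of-variables step, specifically propagating the congruence factor $\Phi_1^\dagger$ and performing the $\Sigma\Sigma^{-1}$ insertion so that the quadratic-in-$L$ term matches the quadratic-in-$\Pi$ term exactly, together with the (easy but essential) point that the $\Sigma$-dependent substitution is a genuine bijection, so that optimality and not merely feasibility is preserved.
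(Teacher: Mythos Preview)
Your proposal is correct and takes essentially the same approach as the paper's proof. Both arguments rest on the change of variables $L\leftrightarrow\Sigma^{-1}\Pi$ combined with the data identities \eqref{eq:abcddata}: the paper constructs, from an optimizer of each SDP, a feasible point of the other with the same objective value and then equates the two optimal values, which is exactly the objective-preserving bijection you set up.
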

	\begin{proof}
		It follows from Assumption \ref{as:iso-io} and Lemma \ref{lem:pe} that condition \eqref{eq:rank:phi} holds, and hence there exist matrices $\Phi_1^\dag$, $\Phi_2^\dag$ such that $\Phi_0[\Phi_1^\dag \Phi_2^\dag] = I$ and SDP \eqref{eq:lqg} is feasible.
		Let $(\bar{\epsilon}, \bar{L}, \bar{\Sigma}, \bar{\Upsilon})$ be an optimal solution of \eqref{eq:lqg} and $(\epsilon_*, \Pi_*, \Sigma_*, \Upsilon_*)$ be an optimal solution of \eqref{eq:lqg_ideal}.
		The key idea of the  proof is to show that a candidate solution of \eqref{eq:lqg} can be constructed by any optimal solution of \eqref{eq:lqg_ideal}, and vice versa.
		Let $(\epsilon, L, \Sigma, \Upsilon) := (\epsilon_*, \Sigma_*^{-1}\Pi_*, \Sigma_* \Upsilon_*)$.
		It follows from \eqref{eq:sys_data_repre:ideal} that $X_1\Phi_1^{\dag} = A$ and $Y_0\Phi_1^{\dag} = C$, hence we have that $X_1 \Phi_1^{\dag} - \Sigma^{-1}\Pi Y_0 \Phi_1^{\dag} = A- LC$.
		This implies that $(L, \Sigma, \Upsilon)$ satisfies the first three constraints in \eqref{eq:lqg}.
		In addition, noticing that $\epsilon = \epsilon_* = {\rm tr}(N_x\Sigma_*) + {\rm tr}(N_y\Upsilon_*)$, one has that $(\epsilon_*, \Sigma_*^{-1}\Pi_*, \Sigma_* \Upsilon_*)$ is a feasible solution of problem \eqref{eq:lqg} and $\epsilon_* \ge \bar{\epsilon} =  {\rm tr}(N_x\bar{\Sigma}) + {\rm tr}(N_y\bar{\Upsilon})$.
		Accordingly, let $\hat{\Pi} = \bar{\Sigma} \bar{L}$, and thus $(\hat{\epsilon}, \hat{\Pi}, \hat{\Sigma}, \hat{\Upsilon}) := (\bar{\epsilon}, \bar{\Sigma} \bar{L}, \bar{\Sigma}, \bar{\Upsilon})$ is a feasible solution to \eqref{eq:lqg_ideal}.
		This implies that $\bar{\epsilon} \ge \epsilon_*$, and consequently, $\bar{\epsilon} = \epsilon_*$.
		This shows that $\bar{L} = \Sigma_*^{-1}\Pi_*$, which completes the proof.
	\end{proof}
	\begin{remark}[Influence of $\Phi_1^\dag$]
		To solve \eqref{eq:lqg_ideal}, the pseudo-inverse $\Phi_1^{\dag}$ of matrix $X_0$ should be computed first.
		Different choices of $\Phi_1^{\dag}$ may have an influence on the error between the optimal observer gain obtained using the model-based SDP \eqref{eq:lqg} and that using the data-based SDP \eqref{eq:lqg_ideal}.
	\end{remark}
	
	Notice from Subsection \ref{sec:preliminaries:dd} that the LQR controller gain $\bar{K}$ can be obtained by solving the data-based SDP in \eqref{eq:lqr_ideal}.
	In other words, the LQR controller gain satisfies $\bar{K} = K_* = U_0Q_*P^{-1}_*$ for any optimal solution of \eqref{eq:lqr_ideal}. 
	Substituting matrices $K_*$ and $L_*$ derived from data-based SDPs into \eqref{eq:ddoc:i}, a data-driven LQG controller resembling the model-based one in \eqref{eq:oc:lqg} is given by
	\begin{subequations}\label{eq:ddoc}
		\begin{align}
			\hat{x}(t \!+\! 1) &= (X_1 \Phi_1^\dag \!+\! X_1 \Phi_2^\dag K_* \!-\! L_* Y_0\Phi_1^\dag)\hat{x}(t) \!+\! L_*y(t)\label{eq:ddoc:est}\\
			u(t) &= K_*\hat{x}(t)\label{eq:ddoc:ctrl}
		\end{align}
	\end{subequations}
	where $\hat{x}(t)$ is the estimated state from the data-driven estimator \eqref{eq:ddoc:est} with $\hat{x}(0) = 0$ and $u(t)$ is the data-driven control input based on the estimated state.

	\begin{theorem}[Equivalence between \eqref{eq:oc:lqg} and \eqref{eq:ddoc}]
		\label{thm:oc:ideal}
		Consider 
		system \eqref{eq:sys} 
		with  mutually independent 
		AWGN processes $w(t)$ and $v(t)$  obeying $\mathbb{E}[w(t)w^\prime(t)] = N_x$ and $\mathbb{E}[v(t)v^\prime(t)] = N_y$, $\forall t$. 
		Under  Assumptions \ref{as:co}, \ref{as:iso-io} and \ref{as:noise-free}, condition \eqref{eq:rank:phi} is met, and SDPs \eqref{eq:lqr_ideal} and  \eqref{eq:lqg_ideal} are feasible.
		Let $(\gamma_*, Q_*, P_*, G_*)$ and $(\epsilon_*, \Pi_*, \Sigma_*, \Upsilon_*)$ denote respectively the unique optimal solution of SDP \eqref{eq:lqr_ideal} and \eqref{eq:lqg_ideal}.
		Then the data-based LQG controller \eqref{eq:ddoc} with $K_* = U_0 Q_*P_*^{-1}$ and $L_* = \Sigma_*^{-1} \Pi_*$ is equivalent to the model-based LQG controller \eqref{eq:lqg} in terms of the control input $u(t)$ and state estimate $\hat{x}(t)$.
	\end{theorem}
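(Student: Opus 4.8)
The plan is to show that, once the two SDPs are solved, the state-space data of the controller \eqref{eq:ddoc} are \emph{literally equal} to those of \eqref{eq:oc:lqg}, so that the two feedback loops are the same dynamical system and hence produce the same signals.

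First I would invoke the ingredients already in place. By Assumption~\ref{as:iso-io} the offline input is persistently exciting of order $n_x+1$, and by Assumption~\ref{as:noise-free} the offline data are noise-free, so Lemma~\ref{lem:pe} gives the rank condition \eqref{eq:rank:phi}; a right inverse $\Phi_0^\dag=[\Phi_1^\dag~\Phi_2^\dag]$ with $\Phi_0\Phi_0^\dag=I$ then exists, and post-multiplying \eqref{eq:sys_data_repre:ideal} by it yields the identities \eqref{eq:abcddata}: $A=X_1\Phi_1^\dag$, $B=X_1\Phi_2^\dag$, $C=Y_0\Phi_1^\dag$. Next, Theorem~\ref{thm:sdp_lti_ideal} shows \eqref{eq:lqr_ideal} is feasible with unique optimal solution giving $K_*=U_0Q_*P_*^{-1}=\bar K$, and Theorem~\ref{thm:lqg:ideal} shows \eqref{eq:lqg_ideal} is feasible with unique optimal solution giving $L_*=\Sigma_*^{-1}\Pi_*=\bar L$. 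I would note here that the AWGN hypothesis, with covariances $N_x,N_y$ entering the objective of \eqref{eq:lqg_ideal}, is what makes $(\bar K,\bar L)$ the LQG-optimal pair and hence \eqref{eq:oc:lqg} the genuine LQG controller; the equivalence itself only uses that the gains match.

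Then I would substitute. Plugging \eqref{eq:abcddata}, $K_*=\bar K$ and $L_*=\bar L$ into \eqref{eq:ddoc:est},
\begin{equation*}
X_1\Phi_1^\dag+X_1\Phi_2^\dag K_*-L_*Y_0\Phi_1^\dag=A+B\bar K-\bar L C,
\end{equation*}
so \eqref{eq:ddoc:est} becomes $\hat x(t+1)=(A+B\bar K-\bar L C)\hat x(t)+\bar L y(t)$ and \eqref{eq:ddoc:ctrl} becomes $u(t)=\bar K\hat x(t)$ --- exactly \eqref{eq:oc:lqg}, with the same initialization $\hat x(0)=0$. Finally, since both controllers are the same LTI system, interconnecting either with the plant \eqref{eq:sys} driven by a fixed realization of $w,v$ gives one and the same closed-loop recursion with one and the same initial data; by uniqueness of its solution --- equivalently, a one-line induction on $t$: if $(x,\hat x,u,y)$ agree up to time $t$ then, the plant and controller maps being identical, they agree at $t+1$ --- the control inputs $u(t)$ and estimates $\hat x(t)$ produced by \eqref{eq:ddoc} coincide with those produced by \eqref{eq:oc:lqg} for every $t$ and every noise sample path.

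The hard part is essentially already done inside Theorems~\ref{thm:sdp_lti_ideal} and~\ref{thm:lqg:ideal}; after the gains are identified the remaining argument is a mechanical matrix substitution. The only subtlety worth flagging is that the interconnection is \emph{closed-loop} --- $y(t)$ depends on $u(t)$, which depends on $\hat x(t)$, which depends on past outputs --- so equality of the two controllers as input--output maps does not by itself give equality of the closed-loop trajectories; that gap is closed by the uniqueness/induction observation above, which needs no quantitative estimate.
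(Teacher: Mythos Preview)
Your proposal is correct and follows essentially the same approach as the paper: invoke Theorem~\ref{thm:sdp_lti_ideal} and Theorem~\ref{thm:lqg:ideal} to identify $K_*=\bar K$ and $L_*=\bar L$, then use the data identities \eqref{eq:abcddata} (equivalently \eqref{eq:sys_data_repre:ideal}) to conclude that \eqref{eq:ddoc} and \eqref{eq:oc:lqg} are the same controller. The paper's proof is a three-sentence sketch of exactly this; your version simply spells out the substitution and adds the closed-loop uniqueness/induction remark that the paper leaves implicit.
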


	\begin{proof}
		It has been shown in Theorem \ref{thm:sdp_lti_ideal} that \eqref{eq:lqr_ideal} is equivalent to \eqref{eq:lqr}.
		In addition, Theorem \ref{thm:lqg:ideal} indicates that \eqref{eq:lqg_ideal} is equivalent to \eqref{eq:lqg}. 
		It follows from \eqref{eq:sys_data_repre:ideal} that \eqref{eq:ddoc} is equivalent to \eqref{eq:oc:lqg}, which completes the proof.
	\end{proof}
	
	\begin{remark}[\emph{LQG control under bounded noise}]
		In the linear-quadratic-Gaussian case, the LQG controller gives the optimal control input. 
		Yet, the motivation of this paper is to fill in the gap in the literature of data-driven state estimation for open-loop unstable systems.
		Hence, we follow tradition by considering bounded noise in Assumption \ref{as:noise}.
		With slight abuse of terminology, we still refer to
		the controller \eqref{eq:ddoc} as LQG controller in the rest of the paper.
	\end{remark}

	Building on Theorems \ref{thm:lqg:ideal} and \ref{thm:oc:ideal}, we are ready to present the main result of this section.
	\begin{theorem}[RGES and ISpS of \eqref{eq:ddoc}]
		Under Assumptions \ref{as:co}---\ref{as:noise-free},  
		condition \eqref{eq:rank:phi} is satisfied, and SDPs \eqref{eq:lqr_ideal} and \eqref{eq:lqg_ideal} are feasible.
		Then the state estimator \eqref{eq:ddoc:est} is RGES and the system \eqref{eq:sys} with $u(t)$ given by \eqref{eq:ddoc:ctrl} treating $w(t)$ as input is ISpS.
	\end{theorem}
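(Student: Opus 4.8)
The plan is to leverage the equivalence established in Theorem~\ref{thm:oc:ideal}: under Assumptions~\ref{as:co}, \ref{as:iso-io} and \ref{as:noise-free} the data-based controller \eqref{eq:ddoc} produces exactly the same $u(t)$ and $\hat{x}(t)$ as the model-based LQG controller \eqref{eq:oc:lqg}, whose gains are such that $A+B\bar{K}$ and $A-\bar{L}C$ are Schur stable. Hence it suffices to analyze the closed loop formed by \eqref{eq:sys} and \eqref{eq:oc:lqg} under the bounded noise of Assumption~\ref{as:noise}, after which the two claimed properties follow from the standard fact that a linear recursion driven by a Schur-stable matrix is exponentially input-to-state stable.

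First I would write the estimation-error dynamics. With $e(t)=x(t)-\hat{x}(t)$, \eqref{eq:est_error} gives $e(t+1)=(A-\bar{L}C)e(t)+w(t)-\bar{L}v(t)$. Since $A-\bar{L}C$ is Schur, there exist $c_e\ge 1$ and $\rho_e\in(0,1)$ with $\|(A-\bar{L}C)^t\|\le c_e\rho_e^t$; unrolling the recursion and bounding the convolution sum by a geometric series yields
\[
\|e(t)\|\le c_e\rho_e^t\|e(0)\|+\frac{c_e}{1-\rho_e}\|w_t\|_\infty+\frac{c_e\|\bar{L}\|}{1-\rho_e}\|v_t\|_\infty,
\]
which is exactly the bound \eqref{eq:prges} with $\phi(s,t)=c_e\rho_e^t s\in\mathcal{KL}$ and $\pi_w(r)=\frac{c_e}{1-\rho_e}r$, $\pi_v(r)=\frac{c_e\|\bar{L}\|}{1-\rho_e}r$ both in $\mathcal{K}$. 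This establishes RGES of the estimator \eqref{eq:ddoc:est}.

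Next, for the closed-loop state I would substitute $u(t)=\bar{K}\hat{x}(t)=\bar{K}(x(t)-e(t))$ into \eqref{eq:sys:x} to get $x(t+1)=(A+B\bar{K})x(t)-B\bar{K}e(t)+w(t)$, and stack this with the error recursion into a single linear system driven only by $w$ and $v$, whose dynamics matrix is block upper-triangular with diagonal blocks $A+B\bar{K}$ and $A-\bar{L}C$, hence Schur stable. Applying the same geometric-series estimate to this augmented system, and using that the controller fixes $\hat{x}(0)=0$ so that $e(0)=x(0)$ and the augmented initial condition is bounded by a multiple of $\|x(0)\|$, yields $\|x(t)\|\le \tilde{c}\tilde{\rho}^t\|x(0)\|+c_w\|w_t\|_\infty+c_v\|v_t\|_\infty$ for suitable $\tilde{c}\ge1$, $\tilde{\rho}\in(0,1)$. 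Since $\|v_t\|_\infty\le\bar{v}$ by Assumption~\ref{as:noise}, the last term is absorbed into the constant $D_0:=c_v\bar{v}$, and with $\phi(s,t)=\tilde{c}\tilde{\rho}^t s\in\mathcal{KL}$ and $\pi_w(r)=c_w r\in\mathcal{K}$ this is precisely Definition~\ref{def:iss}; i.e., system \eqref{eq:sys} with $u(t)$ from \eqref{eq:ddoc:ctrl} and $w(t)$ treated as input is ISpS.

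Because the whole argument collapses to exponential ISS of linear systems, there is no genuine obstacle; the points needing care are (i) correctly splitting each exponential-ISS estimate into the $\mathcal{KL}$/$\mathcal{K}$ pieces demanded by Definitions~\ref{def:rges} and \ref{def:iss}; (ii) exploiting the cascade (block-triangular) structure so the estimation error enters the $x$-dynamics as a perturbation that is itself decaying-plus-bounded, avoiding any small-gain condition; and (iii) observing that the measurement noise, being uniformly bounded, is exactly what forces the residual constant $D_0$ in the ISpS estimate — the price paid for tuning the controller on noise-free data rather than retaining the model-based LQG optimality.
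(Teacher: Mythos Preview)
Your proposal is correct and follows essentially the same approach as the paper: use Schur stability of $A-\bar{L}C$ and a geometric-series estimate to get the RGES bound on $e(t)$, then exploit the cascade structure together with $\|v_t\|_\infty\le\bar v$ to obtain the ISpS estimate with residual constant $D_0$ proportional to $\bar v$. The only cosmetic difference is that the paper substitutes the $e$-bound into the $x$-recursion and carries out the resulting nested sums explicitly, whereas you invoke exponential ISS of the block upper-triangular augmented $(x,e)$-system; your packaging is slightly cleaner but the content is the same.
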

	
	\begin{proof}
		According to the first constraint in \eqref{eq:lqr_ideal} and \eqref{eq:lqg_ideal}, both matrices $X_1 \Phi_1^\dag + X_1 \Phi_2^\dag K_*$ and $X_1 \Phi_1^\dag - L_* Y_0\Phi_1^\dag$ are Schur stable. Therefore, there exist constants $c_1 >1$, $c_2 > 1$, $\beta_1 \in (0,1)$ and $\beta_2 \in (0,1)$ such that $\Vert (X_1 \Phi_1^\dag + X_1 \Phi_2^\dag K_*)^t\Vert \le c_1 \beta_1^t$ and $\Vert(X_1 \Phi_1^\dag - L_* Y_0\Phi_1^\dag)^t \Vert \le c_2 \beta_2^t$.
		Recalling from Theorem \ref{thm:oc:ideal} that SDP \eqref{eq:lqg_ideal} is equivalent to \eqref{eq:lqg}.
		Hence, the estimation error $e(t) = x(t) - \hat{x}(t)$ obeys \eqref{eq:est_error} and recursively
		\begin{align*}
			&\Vert  e(t)\Vert   = \Big\Vert  (A - L_*C)^tx(0) +\sum_{i = 0}^{t-1}(A - L_*C)^i\\
			& \times w(t-i-1) - L_*v(t-i-1)\Big\Vert \nonumber\\
			& \le c_2 \beta_2^t\Vert x(0) \Vert +\sum_{i = 0}^{t-1}c_2 \beta_2^i \Vert w(t - i - 1) - L_*v(t - i - 1)\Vert\nonumber.
		\end{align*}
		where $\sum_{i = 0}^{t-1}c_2 \beta_2^i \Vert w(t - i - 1) - L_*v(t - i - 1)\Vert$ is a $\mathcal{K}$-function of  $w_{[0,t - 1]}$ and $v_{[0,t - 1]}$.
		Therefore, according to Definition \ref{def:rges}, the state estimator \eqref{eq:ddoc} is RGES.
		
		Recalling from Assumption \ref{as:noise} that $v(t) \in \mathbb{B}_{\bar{v}}$, the estimation error can be further bounded by 
		\begin{equation}\label{eq:e_bound}
			\Vert e(t)\Vert \!\le\! c_2 \beta_2^t\Vert x(0) \Vert +\! \sum_{i = 0}^{t-1}c_2 \beta_2^i  (\Vert w(t-i-1)\Vert  + \bar{v} \Vert L_*\Vert ).
		\end{equation}
		Bearing this in mind, it follows from \eqref{eq:sys} that
		\begin{align*}
			\Vert x(t)\Vert & = \Vert(A + BK_*)x(t-1) - BK_*e(t-1) + w(t-1)\Vert\\
			& = \Vert (A + BK_*)^tx(0) - \sum_{i = 0}^{t - 1}(A + BK_*)^i\\
			&~~~\times[BK_*e(t - i - 1) + w(t - i - 1)] \Vert \\
			&\overset{\eqref{eq:e_bound}}{\le} c_1\beta_1^t \Vert x(0)\Vert + \sum_{i = 0}^{t - 1}c_1\beta_1^i \Big[\Vert w_t\Vert_{\infty} + BK_*\\
			&~~~\times \Big(c_2\beta_2^{t - i + 1}\Vert x(0)\Vert +\!\sum_{j = 0}^{i - 1}c_2 \beta_2^j(\Vert w_i\Vert_{\infty} + \bar{v} \Vert L_*\Vert )\Big)\!\Big]\\
			& \le c_1 \beta_1^t\Vert x(0)\Vert + \sum_{i = 0}^{t - 1} c_3 \beta_3^{t - 1}\Vert x(0)\Vert \\
			&~~~+\sum_{i = 0}^{t - 1} c_1 \beta_1^i\Big[\Vert w_t\Vert_{\infty} + \sum_{j = 0}^{i - 1}c_4 \beta_2^j(\Vert w_i\Vert_{\infty} + \bar{v} \Vert L_*\Vert )\Big]
		\end{align*}
		where $\beta_3  := \max\{\beta_1, \beta_2\}$, $c_3 = c_1\Vert BK_*\Vert$ and $c_4 = c_2\Vert BK_*\Vert$.
		Noticing that there exist constants $\tilde{c}_3$ and $\tilde{\beta}_3 \in (\beta_3, 1)$ such that the second term in the last inequality satisfies $\sum_{i = 0}^{t - 1} c_3 \beta_3^{t - 1} \le tc_3 \beta_3^{t - 1} \le \tilde{c}_3 \tilde{\beta}_3^{t}$, yielding
		\begin{align*}
			\Vert x(t)\Vert &\le c_1\tilde{\beta}_3^t\Vert x(0)\Vert + \tilde{c}_3\tilde{\beta}_3^t\Vert x(0)\Vert + \frac{c_1(1 - \beta_1^t)}{1 - \beta_1}\Vert w_t\Vert_{\infty} \nonumber\\
			&~~~ + \frac{c_1 c_4(1 - (1-\beta_2)^t \beta_1^t)(\Vert w_t\Vert_{\infty} + \Vert L_*\Vert \bar{v})}{(1 - \beta_1+\beta_1\beta_2)(1 - \beta_2)} \\
			&\le c_5 \tilde{\beta}_3^t + \pi_w(\Vert w_t\Vert_{\infty}) + \frac{c_1 c_4\Vert L_*\Vert \bar{v}}{(1 -\! \beta_1+\beta_1\beta_2)(1 - \beta_2)}
		\end{align*}
		where $c_5 := c_1 + \tilde{c}_3$ and $\pi_w(\Vert w_t\Vert_{\infty}) := (c_4 + 1-\beta_2)c_1\Vert w_t\Vert_{\infty}/(1-\beta_1-\beta_2+\beta_1\beta_2)$ is a $\mathcal{K}$-function.
		According to Definition \ref{def:iss}, system \eqref{eq:sys} with the proposed LQG controller achieves ISpS, which completes the proof.
	\end{proof}

	
	\section{Robust Data-driven Control Using Noisy Data}
	
	The previous data-driven LQG controller requires clean offline data.
	This requirement can be infeasible in real systems and  limits practical use. 
	This motivates us to develop a robust  controller based on offline noisy  data.
	Relaxing the noiseless data Assumption \ref{as:noise-free}, Theorem \ref{thm:oc:ideal} does not hold.
	Our results in the following can be seen as robust versions of those in the previous section.
	Considering that a data-based SDP for computing robust controller gain $\tilde{K}$ has been suggested in Theorem \ref{thm:sdp_lti_noise}, this section addresses Problem \ref{problem} by modifying SDP \eqref{eq:lqg_ideal} to compensate for noise in the data, such that a robust stabilizing observer gain $\tilde{L}$ can be constructed based on its optimal solution.
	Finally, leveraging gain matrices $\tilde{K}$ and $\tilde{L}$, a robust version of controller \eqref{eq:ddoc_ideal} is devised for which stability analysis follows. 
	
	\subsection{Certainty-equivalent solution for \eqref{eq:lqg_ideal} under noisy data} 
	Due to noise, data matrices satisfy \eqref{eq:sys_data_repre:noise}. Consequently, if condition \eqref{eq:rank:phi} holds, $(A, B, C)$ can be represented by
	\begin{align*}
		A = (X_1 - W_0)\Phi_1^\dag, ~B = (X_1 - W_0)\Phi_2^\dag,~
		C = (Y_0 - V_0)\Phi_1^\dag.
	\end{align*}
	In this setting, 
	SDP \eqref{eq:lqg_ideal} is modified into
	\begin{align}\label{eq:lqg_i-n}
		& \min_{\epsilon, \Pi, \Sigma , \Upsilon} \ \epsilon\\
		&~~~{\rm s.t.}~ \begin{cases}
			\Phi_1^{\dag \prime }[\Sigma (X_1 \!-\! W_0)\!+\! \Pi(Y_0 \!-\! V_0)]^\prime \cdot \Sigma ^{-1}[\star]' \!-\! \Sigma  \!+\! I \!\le\! 0\\
			\Sigma  \ge I\\
			\Upsilon - \Pi^\prime \Sigma ^{-1}\Pi \ge 0\\
			{\rm tr}(N_x\Sigma ) + {\rm tr}(N_y\Upsilon) \le \epsilon \nonumber
		\end{cases}
	\end{align}
	whose feasibility is guaranteed by Theorem \ref{thm:lqg:ideal}.
	Denote its optimal solution by $(\epsilon_*, \Pi_*, \Sigma_*,\Upsilon_*)$ and compute the robust observer gain by $L_* = \Sigma_* ^{-1}\Pi_*$.
	Since the noise $W_0$ and $V_0$ are unknown, SDP \eqref{eq:lqg_i-n} cannot be solved directly.
	It can be observed that the difference between SDPs  \eqref{eq:lqg_ideal} and \eqref{eq:lqg_i-n} lies only in the first constraint.
	For convenience, define
	\begin{subequations}\label{eq:Theta_M_Psi}
		\begin{align}
			\Theta & := \Phi_1^{\dag \prime}(\Sigma X_1 + \Pi Y_0)^\prime \Sigma ^{-1}(\Sigma X_1 + \Pi Y_0)\Phi_1^\dag - \Sigma \label{eq:Theta}\\
			M & := W_0^\prime \Sigma  (X_1 - W_0) - X_1^\prime \Sigma  W_0 + W_0^\prime \Pi(Y_0 - V_0) \nonumber\\
			&~~~- X_1^\prime \Pi V_0 + Y_0^\prime \Pi^\prime W_0- V_0^\prime \Pi^\prime (X_1 - W_0) \nonumber \\
			&~~~ - Y_0^\prime \Pi^\prime \Sigma ^{-1}\Pi V_0- V_0^\prime \Pi^\prime \Sigma ^{-1}\Pi (Y_0 - V_0)\label{eq:M}\\
			\Psi & := \Phi_1^{\dag \prime} M \Phi_1^\dag\label{eq:Psi}
		\end{align}
	\end{subequations}
	which allow us to rewrite the first constraint of \eqref{eq:lqg_ideal} as $\Theta + I \le 0$ and that of \eqref{eq:lqg_i-n} as $\Theta + \Psi + I \le 0$.
	Since noise matrices $W_0$ and $V_0$ only show up in the first constraint of \eqref{eq:lqg_i-n}, i.e., in $\Psi$ in \eqref{eq:Psi}, it is natural to reformulate \eqref{eq:lqg_i-n} by discarding the noise term $\Psi$ for tractability, i.e., by using the same SDP as in the noise-free case 
	\begin{align}\label{eq:lqg_n}
		& \min_{\epsilon, \Pi, \Sigma , \Upsilon} \ \epsilon\\
		&~~~{\rm s.t.}~ \begin{cases}
			\Phi_1^{\dag \prime}(\Sigma X_1 + \Pi Y_0)^\prime \Sigma ^{-1}(\Sigma X_1 + \Pi Y_0)\Phi_1^\dag - \Sigma  + I \le 0\\
			\Sigma  \ge I\\
			\Upsilon - \Pi^\prime \Sigma ^{-1}\Pi \ge 0\\
			{\rm tr}(N_x\Sigma ) + {\rm tr}(N_y\Upsilon) \le \epsilon. 
		\end{cases}\nonumber
	\end{align}
	However, due to the noise-corrupted matrices $X_1$ and $Y_0$, two issues arise with \eqref{eq:lqg_n}.
	\begin{itemize}
		\item [i)]
		This SDP may not be feasible; 
		and,
		\item [ii)]
		Even if it is feasible, the corresponding observer gain $\hat{L}$ may not guarantee that $A - \hat{L}C$ is Schur stable.
	\end{itemize}
	
	To address these issues, sufficient conditions are provided in the following results, whose proof can be proceeded following \cite[Theorem 2]{depersis2021lowcomplexity}.
	
	Suppose that a solution $(\hat{\epsilon},\hat{\Pi},\hat{\Sigma}, \hat{\Upsilon})$ for \eqref{eq:lqg_n} is found and compute the observer gain by $\hat{L} = \hat{\Sigma}^{-1}\hat{\Pi}$.
	With a slight abuse of notation, let $(\epsilon_*, \Pi_*,\Sigma_*, \Upsilon_*)$ denote any optimal solution of \eqref{eq:lqg_i-n} and $L_* = \Sigma_*^{-1} \Pi_*$ the associated observer gain.
	Matrices $\hat{\Theta}, \,\hat{M}, \,\hat{\Psi}, \,\Theta_*, \, M_*$ and $\Psi_*$ are similarly defined as in \eqref{eq:Theta_M_Psi}.
	\begin{theorem}[Feasibility of SDP \eqref{eq:lqg_n}]\label{thm:lqg:i-n}
		Let Assumptions \ref{as:co}---\ref{as:noise} hold.
		For any $\eta_1  \ge 1$, there exists some $\delta \ge 0$ such that if $\max\{\Vert W_0\Vert, \Vert V_0 \Vert \} \le \delta$, then the condition \eqref{eq:rank:phi} is satisfied and the SDP \eqref{eq:lqg_n} is feasible with any optimal solution $(\hat{\epsilon},\hat{\Pi},\hat{\Sigma}, \hat{\Upsilon})$.
		If the following condition holds 
		\begin{equation}\label{eq:hat_Psi}
			\hat{\Psi} \le (1 - 1/\eta_1)I,
		\end{equation} 
		then the observer gain  $\hat{L} = \hat{\Sigma}^{-1}\hat{\Pi}$ is stabilizing.
	\end{theorem}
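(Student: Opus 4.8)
The plan is to mirror the two-part strategy of \cite[Theorem 2]{depersis2021lowcomplexity} (the LQR analogue), adapting it to the dual/Kalman SDP. The argument splits naturally into (a) a rank/feasibility claim that holds for all sufficiently small noise, and (b) a stability claim that is triggered by the explicit sufficient condition \eqref{eq:hat_Psi}. For part (a), first recall that by Assumption \ref{as:iso-io} the noise-free input sequence is persistently exciting of order $n_x+1$, so in the absence of noise $\mathrm{rank}(\Phi_0)=n_x+n_u$ by Lemma \ref{lem:pe}; since rank is lower semicontinuous, there is a $\delta_1\ge 0$ such that $\max\{\|W_0\|,\|V_0\|\}\le\delta_1$ preserves \eqref{eq:rank:phi}, hence $\Phi_0^\dag=[\Phi_1^\dag~\Phi_2^\dag]$ is well defined. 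Feasibility of \eqref{eq:lqg_n} then follows from Theorem \ref{thm:lqg:ideal}: that theorem shows the data-based SDP built from $(X_1,Y_0,\Phi_1^\dag)$ is feasible whenever the corresponding model-based SDP \eqref{eq:lqg} is, and \eqref{eq:lqg} is feasible under Assumption \ref{as:co} (detectability of $(A,C)$). Note \eqref{eq:lqg_n} uses the \emph{noisy} $X_1,Y_0$ but is structurally identical to \eqref{eq:lqg_ideal}; one argues that for $\delta$ small the perturbed problem remains feasible by exhibiting a feasible point — e.g. perturb the noise-free optimizer and use strict feasibility ($\Sigma\ge I$, the Riccati-type inequality being strict after adding $I$) to absorb the $O(\delta)$ perturbation. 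Set $\delta=\min\{\delta_1,\delta_2\}$.

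For part (b), the key identity is the decomposition already set up in \eqref{eq:Theta_M_Psi}: for the \emph{true} system $A-\hat L C = (X_1-W_0)\Phi_1^\dag - \hat\Sigma^{-1}\hat\Pi (Y_0-V_0)\Phi_1^\dag$, and a direct computation shows
\begin{equation*}
	(A-\hat LC)'\hat\Sigma (A-\hat LC) - \hat\Sigma + I = \hat\Theta + I + \hat\Psi,
\end{equation*}
where $\hat\Theta$ collects the noise-free terms (built from $X_1,Y_0$) and $\hat\Psi=\Phi_1^{\dag\prime}\hat M\Phi_1^\dag$ collects all cross-terms involving $W_0,V_0$. Because $(\hat\epsilon,\hat\Pi,\hat\Sigma,\hat\Upsilon)$ is feasible for \eqref{eq:lqg_n}, its first constraint gives $\hat\Theta + I\le 0$, i.e. $\hat\Theta \le -I$. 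Combining with the hypothesis \eqref{eq:hat_Psi}, $\hat\Psi\le(1-1/\eta_1)I$, yields
\begin{equation*}
	(A-\hat LC)'\hat\Sigma(A-\hat LC) - \hat\Sigma \le -I + (1-1/\eta_1)I = -(1/\eta_1)I < 0 .
\end{equation*}
Since $\hat\Sigma\ge I>0$, this is a Lyapunov inequality certifying that $A-\hat LC$ is Schur stable (standard discrete-time Lyapunov argument: $V(e)=e'\hat\Sigma e$ strictly decreases along $e\mapsto (A-\hat LC)e$). That delivers the stabilizing property of $\hat L=\hat\Sigma^{-1}\hat\Pi$ and completes the proof.

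The main obstacle is verifying the algebraic identity that expresses $(A-\hat LC)'\hat\Sigma(A-\hat LC)-\hat\Sigma$ as $\hat\Theta + \hat\Psi$ with $\hat M$ exactly the expression in \eqref{eq:M}; this is a somewhat lengthy but mechanical expansion of the quadratic form after substituting $A=(X_1-W_0)\Phi_1^\dag$, $C=(Y_0-V_0)\Phi_1^\dag$ and $\hat L = \hat\Sigma^{-1}\hat\Pi$, and then collecting the eight bilinear noise terms — one must be careful that $\hat\Theta$ is formed from the noise-\emph{corrupted} $X_1,Y_0$ (the same matrices appearing in the SDP constraint), so that the constraint $\hat\Theta+I\le 0$ can be invoked verbatim. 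A secondary point requiring care is making the feasibility perturbation in part (a) quantitative enough to name a single $\delta$ that works simultaneously for the rank condition and for feasibility; here one leans on the strict inequalities in \eqref{eq:lqg_n} and continuity of the map from $(W_0,V_0)$ to the constraint matrices.
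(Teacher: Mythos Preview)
Your proposal is correct and matches the paper's approach: the paper does not spell out a proof but simply states that it follows \cite[Theorem~2]{depersis2021lowcomplexity}, which is exactly the template you adapt. Your part~(b) is precisely the intended argument --- use the decomposition \eqref{eq:Theta_M_Psi} so that the first constraint of \eqref{eq:lqg_n} reads $\hat\Theta+I\le 0$, combine with the hypothesis $\hat\Psi\le(1-1/\eta_1)I$ to obtain $(A-\hat LC)'\hat\Sigma(A-\hat LC)-\hat\Sigma\le -(1/\eta_1)I$, and conclude Schur stability from $\hat\Sigma\ge I$; your part~(a) perturbation argument for the rank condition and feasibility is likewise the standard route taken in the cited reference.
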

	

	Theorem \ref{thm:lqg:i-n} asserts that feasibility of SDP \eqref{eq:lqg_n} and stability of the associated observer gain $\hat{L}$ rely on two conditions: i)
	sufficiently small noise and ii)  condition \eqref{eq:hat_Psi}. 	
The former is standard in data-driven control, and it does not bring issues when replacing \eqref{eq:lqg_ideal} with \eqref{eq:lqg_n} under small-noise data.
	The challenge relates to how to ensure the condition \eqref{eq:hat_Psi}, where $\hat{\Psi}$ is the gap between the stability conditions of the ideal formulation \eqref{eq:lqg_i-n} and the certainty-equivalent \eqref{eq:lqg_n}.
	To fulfill \eqref{eq:hat_Psi}, we search for a solution $\hat{\Phi}$ with small norm.
	This requirement translates into requiring $\hat{M}$ to have small norm.
	However, there is no such constraint  in \eqref{eq:lqg_n}. Thus, even small noise may undermine the condition \eqref{eq:hat_Psi}.
	To tackle this issue, a new SDP is formulated by incorporating the robustness constraint \eqref{eq:hat_Psi} into \eqref{eq:lqg_n}.

	\subsection{Robust observer gain}
	As discussed above, we aim at designing a robust data-based SDP by incorporating the robustness constraint \eqref{eq:hat_Psi} into \eqref{eq:lqg_n}; that is, the solution of SDP automatically enforces the norm of $ \hat{\Psi}$.  
	Observe from \eqref{eq:Psi} that  $\Vert \hat{\Psi}\Vert \le \Vert \hat{M}\Vert \Vert \Phi_1^{\dag}\Vert^2 = \Vert \hat{M}\Vert\Vert \Phi_1^{\dag \prime} \Phi_1^\dag\Vert$, which implies that smaller $\Vert \hat{M}\Vert$ and $\Vert \Phi_1^{\dag \prime} \Phi_1^\dag\Vert$ leads to smaller $\Vert\hat{\Psi}\Vert$.
	Since $\Phi_1^\dag$ appears in \eqref{eq:lqg_n}, we obtain this matrix by solving an optimization problem minimizing $\Vert \Phi_1^{\dag \prime} \Phi_1^\dag\Vert$ as follows
	\begin{align}\label{eq:phi1}
		& \min_{\rho, M_2,\Phi_1^\dag, \Phi_2^\dag}  \ \rho\\
		&~~~~~{\rm s.t.}~ \ 
		\begin{cases}
			\Phi_0[\Phi_1^\dag~\Phi_2^\dag] = I\\
			\left[
			\begin{matrix}
				M_2 & \Phi_1^{\dag\prime}\\
				\Phi_1^\dag & I
			\end{matrix}
			\right] \ge 0\\
			{\rm tr}(M_2)  \le \rho.
		\end{cases}\nonumber
	\end{align}
	Feasibility of this SDP can be guaranteed under condition \eqref{eq:rank:phi}.
	Define its optimal solution by $(\tilde{\rho}, \tilde{M}_2, \tilde{\Phi}_1^{\dag}, \tilde{\Phi}_2^{\dag})$.
	Based on $\tilde{\Phi}_1^{\dag}$, a noise-robust version of SDP \eqref{eq:lqg_n} is given by
	\begin{align}\label{eq:lqg_noise}
		& \min_{\epsilon, \Pi, \Sigma , \Upsilon} \epsilon\\
		&~~~{\rm s.t.}\, \begin{cases}
			\tilde{\Phi}_1^{\dag \prime}(\Sigma X_1 + \Pi Y_0)^\prime \Sigma ^{-1}(\Sigma X_1 + \Pi Y_0)\tilde{\Phi}_1^{\dag} - \Sigma  + I \le 0\\
			\Sigma  \ge I\\
			\Upsilon - \Pi^\prime \Sigma ^{-1}\Pi \ge 0\\
			{\rm tr}(N_x\Sigma ) + {\rm tr}(N_y\Upsilon) + \alpha_2(\Vert \Upsilon\Vert + \Vert \Sigma \Vert + \Vert \Pi\Vert) \le \epsilon\nonumber
		\end{cases}
	\end{align}
	where $\alpha_2 >0$ is a parameter interpolating performance and robustness.
	Specifically, larger $\alpha_2$ values favor solutions with smaller $\Vert \Sigma \Vert + \Vert \Pi\Vert + \Vert \Upsilon\Vert$.
	Denote its optimal solution by $(\tilde{\epsilon}, \tilde{\Pi}, \tilde{\Sigma}, \tilde{\Upsilon})$.
	It follows from \eqref{eq:M} that 
	\begin{align}
		&\Vert \tilde{M}\Vert \le \max\big\{ \Vert W_0\Vert\Vert X_1 - W_0\Vert + \Vert  X_1\Vert\Vert W_0\Vert,~\Vert X_1\Vert\Vert V_0\Vert \nonumber\\
		&+\!\Vert W_0\Vert\Vert Y_0 \!-\! V_0\Vert \!+\! \Vert Y_0\Vert\Vert W_0\Vert \!+\! \Vert V_0\Vert\Vert X_1 \!-\! W_0\Vert, ~\Vert Y_0\Vert\Vert V_0\Vert\nonumber\\
		& + \Vert V_0\Vert\Vert Y_0 - V_0\Vert \big\}(\Vert \tilde{\Sigma} \Vert + \Vert \tilde{\Pi}\Vert + \Vert \tilde{\Pi}^\prime \tilde{\Sigma}  \tilde{\Pi}^{-1}\Vert)
	\end{align}
	which asserts that, the larger $\alpha_2$, the smaller $\Vert \tilde{M}\Vert$.
	A sufficient condition for \eqref{eq:hat_Psi} is 
	\begin{equation}
		\Vert \tilde{M}\Vert \Vert \tilde{\Phi}_1^{\dag \prime} \tilde{\Phi}_1^\dag \Vert \le 1 - 1/\eta_1.
	\end{equation}
	Obviously, larger $\alpha_2$ values promote more stabilizing solutions.
	Building on this observation, a counterpart of Theorem \ref{thm:lqg:i-n} for \eqref{eq:lqg_noise} is established.
	\begin{theorem}[Feasibility of SDP \eqref{eq:lqg_noise}]\label{thm:lqg_noise}
		Under Assumptions \ref{as:co}---\ref{as:noise}, 
		there exists a value $\delta_g \ge 0$ such that if $\max\{\Vert W_0\Vert, \Vert V_0 \Vert \} \le \delta_g$, then \eqref{eq:rank:phi} is satisfied and SDP \eqref{eq:lqg_noise} is feasible.
		In addition, the observer gain $\tilde{L} = \tilde{\Sigma}^{-1}\tilde{\Pi}$ is stabilizing.
	\end{theorem}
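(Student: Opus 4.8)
The plan is to mirror the argument of Theorem~\ref{thm:lqg:i-n} (which itself follows \cite[Theorem~2]{depersis2021lowcomplexity}), but with the additional bookkeeping forced by the two-step construction: first the auxiliary SDP \eqref{eq:phi1} that fixes $\tilde{\Phi}_1^{\dag}$, then the soft-constrained SDP \eqref{eq:lqg_noise}. First I would invoke Lemma~\ref{lem:pe} together with a standard perturbation/continuity argument: since the noise-free data matrix $\Phi_0$ has full row rank $n_x+n_u$ (persistency of excitation of order $n_x+1$), and rank is lower-semicontinuous, there is a threshold $\delta_g^{(1)}>0$ such that $\max\{\|W_0\|,\|V_0\|\}\le\delta_g^{(1)}$ still gives \eqref{eq:rank:phi}; this makes \eqref{eq:phi1} feasible (any right inverse of $\Phi_0$ works, and $M_2$ can be taken as $\Phi_1^{\dag\prime}\Phi_1^{\dag}+\varepsilon I$ via Schur complement), so $(\tilde{\rho},\tilde{M}_2,\tilde{\Phi}_1^{\dag},\tilde{\Phi}_2^{\dag})$ exists, and the minimized value $\|\tilde{\Phi}_1^{\dag\prime}\tilde{\Phi}_1^{\dag}\|\le\operatorname{tr}(\tilde{M}_2)=\tilde{\rho}$ is bounded uniformly as the noise shrinks (it converges to the corresponding noise-free optimum).

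Next I would establish feasibility of \eqref{eq:lqg_noise}. The point is that the soft term $\alpha_2(\|\Upsilon\|+\|\Sigma\|+\|\Pi\|)$ only enters the last (scalar) constraint, so feasibility of \eqref{eq:lqg_noise} is equivalent to feasibility of its hard constraints, which are exactly those of \eqref{eq:lqg_n} with $\Phi_1^{\dag}$ replaced by the fixed $\tilde{\Phi}_1^{\dag}$; for $\epsilon$ large enough the soft constraint is slack. So I would argue as in Theorem~\ref{thm:lqg:i-n}: by Theorem~\ref{thm:lqg:ideal} the noise-free SDP \eqref{eq:lqg_ideal} (with $\tilde{\Phi}_1^{\dag}$) is feasible with some interior solution, and since the left-hand side of the first (Riccati-type) constraint depends continuously on $(X_1,Y_0)$, a sufficiently small perturbation keeps a nearby point feasible — giving a threshold $\delta_g^{(2)}>0$. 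Taking $\delta_g:=\min\{\delta_g^{(1)},\delta_g^{(2)}\}$ handles both.

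For the stability claim I would reuse the decomposition $\Theta+\Psi+I\le 0$ versus $\Theta+I\le 0$ from \eqref{eq:Theta_M_Psi}: the true stability certificate for $A-\tilde{L}C$ is $\tilde{\Theta}+\tilde{\Psi}+I\le 0$ while \eqref{eq:lqg_noise} only enforces $\tilde{\Theta}+I\le 0$, so it suffices to show $\tilde{\Psi}\le(1-1/\eta_1)I$ for some $\eta_1\ge 1$ — exactly condition \eqref{eq:hat_Psi} — because then $\tilde{\Theta}+\tilde{\Psi}\le\tilde{\Theta}+(1-1/\eta_1)I\le -(1/\eta_1)I<0$, and a Lyapunov/DARE argument (as in \cite[Theorem~2]{depersis2021lowcomplexity}) gives Schur stability of $A-\tilde{L}C=X_1\tilde{\Phi}_1^{\dag}-\tilde{\Sigma}^{-1}\tilde{\Pi}\,Y_0\tilde{\Phi}_1^{\dag}$ after substituting the data representation. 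To get \eqref{eq:hat_Psi} I would chain the two bounds already displayed in the text: $\|\tilde{\Psi}\|\le\|\tilde{M}\|\,\|\tilde{\Phi}_1^{\dag\prime}\tilde{\Phi}_1^{\dag}\|$, with $\|\tilde{M}\|$ bounded by the noise-dependent factor times $(\|\tilde{\Sigma}\|+\|\tilde{\Pi}\|+\|\tilde{\Pi}^\prime\tilde{\Sigma}\tilde{\Pi}^{-1}\|)$, the latter controlled by the soft constraint (larger $\alpha_2$ forces it down, and in the limit of small noise the data-dependent prefactors stay bounded). Shrinking $\delta_g$ once more so that this product is $\le 1-1/\eta_1$ closes the argument.

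The main obstacle is the coupling in the last step: the soft constraint only bounds $\|\tilde{\Sigma}\|+\|\tilde{\Pi}\|+\|\tilde{\Upsilon}\|$, but the bound on $\|\tilde{M}\|$ involves $\|\tilde{\Pi}^\prime\tilde{\Sigma}\tilde{\Pi}^{-1}\|$ rather than $\|\tilde{\Upsilon}\|$, so I must either argue that the third hard constraint $\tilde{\Upsilon}\succeq\tilde{\Pi}^\prime\tilde{\Sigma}^{-1}\tilde{\Pi}$ together with $\tilde{\Sigma}\succeq I$ lets me replace $\|\tilde{\Pi}^\prime\tilde{\Sigma}\tilde{\Pi}^{-1}\|$ by a quantity dominated by $\|\tilde{\Sigma}\|$ and $\|\tilde{\Pi}\|$ (using $\tilde{\Sigma}\preceq\|\tilde\Sigma\|I$ to bound $\tilde{\Pi}^\prime\tilde{\Sigma}\tilde{\Pi}^{-1}$... which is not automatic since $\tilde\Pi$ need not be square), or I should reinterpret the displayed $\|\tilde{M}\|$-bound as already written in terms of $\|\tilde\Sigma\|+\|\tilde\Pi\|+\|\tilde\Upsilon\|$ via the constraint $\tilde\Upsilon\succeq\tilde\Pi'\tilde\Sigma^{-1}\tilde\Pi$. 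Making this last estimate rigorous — i.e. certifying that the soft penalty genuinely controls every term appearing in $\|\tilde M\|$ — is the delicate part; everything else is a routine continuity-plus-perturbation repeat of the LQR case.
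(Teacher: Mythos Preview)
Your proposal is essentially correct and matches the paper's intended route: the paper does not give a proof at all, stating only that it ``is similar to that of \cite[Lemma~4]{depersis2021lowcomplexity}, and is thus omitted here due to space limitation.'' Your reconstruction (rank preservation by continuity $\Rightarrow$ feasibility of \eqref{eq:phi1} $\Rightarrow$ feasibility of the hard part of \eqref{eq:lqg_noise} via Theorem~\ref{thm:lqg:ideal} plus perturbation $\Rightarrow$ stability via the $\Theta/\Psi$ decomposition and condition~\eqref{eq:hat_Psi}) is exactly the template of that cited lemma, adapted to the dual observer setting; you are simply writing out what the paper leaves to the reader.

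On the ``main obstacle'' you flag: the displayed bound on $\|\tilde M\|$ in the paper almost certainly contains a typo --- the third term should read $\|\tilde\Pi'\tilde\Sigma^{-1}\tilde\Pi\|$ rather than $\|\tilde\Pi'\tilde\Sigma\tilde\Pi^{-1}\|$ (note $\tilde\Pi$ is $n_x\times n_y$ and generally not invertible, and inspection of \eqref{eq:M} shows the quadratic-in-$\Pi$ terms carry $\Sigma^{-1}$). With that correction the third hard constraint $\tilde\Upsilon\succeq\tilde\Pi'\tilde\Sigma^{-1}\tilde\Pi$ gives $\|\tilde\Pi'\tilde\Sigma^{-1}\tilde\Pi\|\le\|\tilde\Upsilon\|$, so the soft penalty $\alpha_2(\|\tilde\Upsilon\|+\|\tilde\Sigma\|+\|\tilde\Pi\|)$ controls every factor in $\|\tilde M\|$ as you wanted, and the delicate step dissolves.
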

	
	The proof of Theorem \ref{thm:lqg_noise} is similar to that of \cite[Lemma 4]{depersis2021lowcomplexity}, and is thus omitted here due to space limitation.
	Having designed the gain matrices $\tilde{K}$ and $\tilde{L}$ based on offline noisy data, we proceed to construct a robust version of the controller \eqref{eq:ddoc_ideal}.

	\subsection{Robust data-based controller using noisy data}
	Leveraging  $(\tilde{\Phi}_1^{\dag},\tilde{\Phi}_2^{\dag})$ from solving \eqref{eq:phi1} and matrices $\tilde{K}$, $\tilde{L}$ constructed by optimal solutions of \eqref{eq:lqr_noise} and \eqref{eq:lqg_noise}, respectively, the controller \eqref{eq:ddoc} is robustified as follows
	\begin{subequations}\label{eq:ddoc_ideal}
		\begin{align}
			\hat{x}(t + 1) &= \big[(X_1 - W_0) \tilde{\Phi}_1^\dag + (X_1 - W_0) \tilde{\Phi}_2^\dag \tilde{K} \nonumber\\
			&~~~- \tilde{L} (Y_0 - V_0)\tilde{\Phi}_1^\dag\big]\hat{x}(t) + \tilde{L} y(t)\\
			u(t) &= \tilde{K} \hat{x}(t).
		\end{align}
	\end{subequations}
	Again, since noise $W_0$ and $V_0$ are not known, \eqref{eq:ddoc_ideal} cannot be implemented. 
	Upon ignoring  $W_0$ and $V_0$, we arrive at an implementable robust data-driven controller as follows
	\begin{subequations}\label{eq:ddoc_noise}
		\begin{align}
			\hat{x}(t + 1) &= (X_1 \tilde{\Phi}_1^\dag + X_1 \tilde{\Phi}_2^\dag \tilde{K} - \tilde{L} Y_0\tilde{\Phi}_1^\dag)\hat{x}(t) + \tilde{L} y(t)\label{eq:ddoc_noise:est}\\
			u(t) &= \tilde{K} \hat{x}(t)\label{eq:ddoc_noise:ctrl}.
		\end{align}
	\end{subequations}
	Stability analysis for the resulting system \eqref{eq:sys} with this controller is presented as follows.
	
	\newcounter{TempEqCnt} 
	\setcounter{TempEqCnt}{\value{equation}} 
	\setcounter{equation}{32} 
	\begin{figure*}[!h]
		\normalsize
	\begin{align}
		&\left[
		\begin{matrix}
			x(t + 1)\\
			\hat{x}(t + 1)
		\end{matrix}
		\right] = \left[
		\begin{matrix}
			(X_1 \!-\! W_0)\tilde{\Phi}_1^\dag\!\! & (X_1 - W_0)\tilde{\Phi}_2^\dag \tilde{K}\\
			\tilde{L}(Y_0 - V_0)\tilde{\Phi}_1^\dag\!\! & X_1\tilde{\Phi}_1^\dag \!+\! X_1\tilde{\Phi}_2^\dag \tilde{K}\! -\! \tilde{L} Y_0 \tilde{\Phi}_1^\dag
		\end{matrix}
		\right]
		\left[
		\begin{matrix}
			x(t)\\
			\hat{x}(t)
		\end{matrix}
		\right]+\left[
		\begin{matrix}
			w(t)\\
			\tilde{L} v(t)
		\end{matrix}
		\right] = \Xi_0	\left[
		\begin{matrix}
			x(t)\\
			\hat{x}(t)
		\end{matrix}
		\right]
		+ \left[
		\begin{matrix}
			w(t)\\
			\tilde{L} v(t)
		\end{matrix}
		\right] 
		\label{eq:extend_state:1}\\
		&\left[
		\begin{matrix}
			x(t + 1)\\
			e(t + 1)
		\end{matrix}
		\right] = \Xi_3	\left[
		\begin{matrix}
			x(t)\\
			e(t)
		\end{matrix}
		\right]
		+ \left[
		\begin{matrix}
			w(t)\\
			w(t) - \tilde{L} v(t)
		\end{matrix}
		\right],
		\label{eq:extend_state:2}\\
		&~ {\rm where}~~\Xi_3 = \underbrace{\left[
			\begin{matrix}
				(X_1 - W_0)\tilde{\Phi}_1^\dag + (X_1 - W_0)\tilde{\Phi}_2^\dag \tilde{K} \!\!&\!\! (X_1 - W_0)\tilde{\Phi}_2^\dag \tilde{K}\\
				0\!\! &\!\! X_1\tilde{\Phi}_1^\dag  - \tilde{L} Y_0 \tilde{\Phi}_1^\dag
			\end{matrix}
			\right]}_{\triangleq \Xi_1} 
		\!+ \!
		\underbrace{\left[
			\begin{matrix}
				0 \!\!&\!\! 0\\
				W_0 \tilde{\Phi}_1^\dag \!+\! W_0\tilde{\Phi}_2^\dag\tilde{K} \!-\! \tilde{L}V_0\tilde{\Phi}_1^\dag \!\!&\!\! W_0 \tilde{\Phi}_2^\dag \tilde{K}
			\end{matrix}
			\right]}_{\triangleq \Xi_2}
		\label{eq:extend_state:3}
	\end{align}
	
	\hrulefill
	\vspace*{4pt}
\end{figure*}
\setcounter{equation}{\value{TempEqCnt}} 

\begin{theorem}[RGES and ISpS for \eqref{eq:ddoc_noise}]
	Under Assumptions \ref{as:co}---\ref{as:noise}, 
	there exists  $\delta_{\bar{n}}>0$ such that for all $\bar{n} := \max\{\bar{w}, \bar{v}\} \le \delta_{\bar{n}}$ condition \eqref{eq:rank:phi} is satisfied and SDPs \eqref{eq:lqr_noise}, \eqref{eq:phi1} and \eqref{eq:lqg_noise} are feasible.
	Consider system \eqref{eq:sys} with controller \eqref{eq:ddoc_noise}, where $(\tilde{\Phi}_1^{\dag}, \tilde{\Phi}_2^{\dag})$ is an optimal solution of \eqref{eq:phi1}, and $\tilde{K} = U_0 \tilde{Q} \tilde{P}^{-1}$,  $\tilde{L} = \tilde{\Sigma}^{-1} \tilde{\Pi}$ 
	obtained by solving \eqref{eq:lqr_noise}
	and \eqref{eq:lqg_noise}, respectively.
	Then the state estimator \eqref{eq:ddoc_noise:est} is RGES and the system with $u(t)$ given by \eqref{eq:ddoc_noise:ctrl} treating $w(t)$ as input is ISpS.
\end{theorem}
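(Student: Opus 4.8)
The plan is to collapse both claims into the Schur stability of the single augmented matrix $\Xi_3$ of \eqref{eq:extend_state:3}, after which the input-to-state estimates are the same ones already used in Section \ref{sec:lqg:i}. First I would fix $\delta_{\bar n}$ to be no larger than the noise thresholds produced by Theorems \ref{thm:sdp_lti_noise} and \ref{thm:lqg_noise} (and by the feasibility of \eqref{eq:phi1} under \eqref{eq:rank:phi}); it will be shrunk once more below. Then for every $\bar n\le\delta_{\bar n}$ the rank condition \eqref{eq:rank:phi} holds, SDPs \eqref{eq:lqr_noise}, \eqref{eq:phi1}, \eqref{eq:lqg_noise} are feasible, and $\tilde K=U_0\tilde Q\tilde P^{-1}$, $\tilde L=\tilde\Sigma^{-1}\tilde\Pi$, $(\tilde\Phi_1^\dag,\tilde\Phi_2^\dag)$ are well defined. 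Writing the true system \eqref{eq:sys} in closed loop with the implementable controller \eqref{eq:ddoc_noise} gives \eqref{eq:extend_state:1}, and in the error coordinate $e=x-\hat x$ it becomes \eqref{eq:extend_state:2} with dynamics matrix $\Xi_3=\Xi_1+\Xi_2$ from \eqref{eq:extend_state:3}.

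Next I would show $\Xi_3$ is Schur. The matrix $\Xi_1$ is block upper triangular with diagonal blocks $A+B\tilde K=(X_1-W_0)\tilde\Phi_1^\dag+(X_1-W_0)\tilde\Phi_2^\dag\tilde K$ and the certainty-equivalent observer matrix $X_1\tilde\Phi_1^\dag-\tilde L Y_0\tilde\Phi_1^\dag$; the first is Schur by Theorem \ref{thm:sdp_lti_noise} and the second by Theorem \ref{thm:lqg_noise}, each carrying a discrete-Lyapunov certificate obtained from the corresponding SDP solution. Hence $\Xi_1$ is Schur, with a block-diagonal Lyapunov certificate assembled from those of its diagonal blocks. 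The soft regularizers in \eqref{eq:lqr_noise}, \eqref{eq:phi1} and \eqref{eq:lqg_noise}, together with $\tilde P\ge I$, $\tilde\Sigma\ge I$ and the fixed data matrices $X_1,Y_0,U_0$, bound $\|\tilde K\|,\|\tilde L\|,\|\tilde\Phi_1^\dag\|,\|\tilde\Phi_2^\dag\|,\|\tilde P\|,\|\tilde\Sigma\|$ uniformly over $\bar n\le\delta_{\bar n}$, so this certificate has a stability margin bounded below by some $m>0$ independent of the noise realization. Since every nonzero block of $\Xi_2$ is linear in $W_0$ or $V_0$, we have $\|\Xi_2\|\le c_\Xi\bar n$; after shrinking $\delta_{\bar n}$ so that $c_\Xi\delta_{\bar n}$ is small compared with $m$, the standard perturbation argument for the discrete Lyapunov inequality shows $\Xi_3$ is Schur, so $\|\Xi_3^{\,t}\|\le c\beta^{t}$ for all $t$, for some $c\ge1$, $\beta\in(0,1)$.

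The rest is routine and mirrors the noise-free proof. The forcing term $[\,w(t);\,w(t)-\tilde L v(t)\,]$ in \eqref{eq:extend_state:2} has norm at most $\sqrt2\,\bar w+\|\tilde L\|\bar v$ by Assumption \ref{as:noise}, so iterating \eqref{eq:extend_state:2} yields
\begin{equation*}
	\left\|\begin{bmatrix}x(t)\\ e(t)\end{bmatrix}\right\|\le c\beta^{t}\left\|\begin{bmatrix}x(0)\\ e(0)\end{bmatrix}\right\|+\frac{c}{1-\beta}\big(\sqrt2\,\|w_t\|_\infty+\|\tilde L\|\,\|v_t\|_\infty\big).
\end{equation*}
With $\hat x(0)=0$ one has $e(0)=x(0)$, so the left side dominates $\|x(t)-\hat x(t)\|$ while the first term on the right equals $\sqrt2\,c\beta^{t}\|x(0)-\hat x(0)\|$; taking $\pi_w(r):=\tfrac{\sqrt2\,c}{1-\beta}r$, $\pi_v(r):=\tfrac{c\|\tilde L\|}{1-\beta}r$ (both in $\mathcal{K}$) gives RGES in the sense of Definition \ref{def:rges}. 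For ISpS, treat $w$ as input and use $\|v_t\|_\infty\le\bar v$ in the same inequality to obtain $\|x(t)\|\le\sqrt2\,c\beta^{t}\|x(0)\|+\pi_w(\|w_t\|_\infty)+D_0$ with the constant $D_0:=\tfrac{c\|\tilde L\|\bar v}{1-\beta}$, which is ISpS in the sense of Definition \ref{def:iss}.

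I expect the Schur-stability step for $\Xi_3$ to be the main obstacle. Unlike in the noise-free case, the error dynamics are no longer decoupled from the state (the $(2,1)$ block of $\Xi_3$ is $O(\bar n)$), so one cannot work with $A-\tilde L C$ alone; the two diagonal blocks of $\Xi_1$ must be certified simultaneously from two different SDPs, and one must extract a Lyapunov margin for $\Xi_1$ that is uniform in $\bar n$ in order to absorb the $O(\bar n)$ perturbation $\Xi_2$. This is precisely why the soft constraints in \eqref{eq:lqr_noise}, \eqref{eq:phi1} and \eqref{eq:lqg_noise}, and the gap condition \eqref{eq:hat_Psi} behind Theorem \ref{thm:lqg:i-n}, are needed: they keep the data-based gains and the pseudo-inverse $\tilde\Phi_1^\dag$ small, which is what makes the certainty-equivalent design robustly stabilizing for sufficiently small data noise.
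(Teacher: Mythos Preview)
Your proposal is correct and follows essentially the same route as the paper: pass to the error coordinate to obtain the composite dynamics \eqref{eq:extend_state:2} with $\Xi_3=\Xi_1+\Xi_2$, certify Schur stability of the block upper triangular $\Xi_1$ from Theorems~\ref{thm:sdp_lti_noise} and~\ref{thm:lqg_noise}, bound $\|\Xi_2\|=O(\bar n)$, and absorb it by a perturbation argument after shrinking $\delta_{\bar n}$. The only cosmetic difference is in the last step: the paper extracts the RGES/ISpS bounds via a quadratic Lyapunov function $V(z)=z'P_zz$ for $\Xi_3$ together with Young's inequality, whereas you iterate directly using $\|\Xi_3^{\,t}\|\le c\beta^{t}$; both are standard and yield the same conclusions.
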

\begin{proof}
	System \eqref{eq:sys} with controller \eqref{eq:ddoc_noise}  can be described by the composite system \eqref{eq:extend_state:1}, presented at the top of this page.
	Considering the estimation error  $e(t) = x(t) - \hat{x}(t)$, system  \eqref{eq:extend_state:1} can be transformed into the composite system \eqref{eq:extend_state:2}.
	Since the transformation from $[x(t)'~ \hat{x}'(t)]'$ to $[x(t)'~e'(t)]'$ is linear and non-singular, matrix $\Xi_3$ is similar to $\Xi_0$.
	From \cite[Section 6.6]{cullen2012matrices}, this implies that system \eqref{eq:extend_state:1} has same characteristic polynomial as system \eqref{eq:extend_state:2}. Hence, if $\Xi_3$ is Schur stable, so is $\Xi_0$. 
	According to \eqref{eq:extend_state:3}, one has that $\Vert \Xi_3\Vert = \Vert \Xi_1 + \Xi_2\Vert\le \Vert \Xi_1\Vert + \Vert \Xi_2\Vert$.
	In the following, we analyze $\|\Xi_1\|$ and $\|\Xi_2\|$ separately.
	
	The eigenvalues of matrix $\Xi_1$ can be expressed in terms of those of matrices $(X_1 - W_0)\tilde{\Phi}_1^\dag + (X_1 - W_0)\tilde{\Phi}_2^\dag \tilde{K}$ and $X_1\tilde{\Phi}_1^\dag - \tilde{L}Y_0 \tilde{\Phi}_1^\dag$, i.e., $\lambda(\Xi_1) = \lambda((X_1 - W_0)\tilde{\Phi}_1^\dag + (X_1 - W_0)\tilde{\Phi}_2^\dag \tilde{K}) \cup \lambda(X_1\tilde{\Phi}_1^\dag - \tilde{L}Y_0 \tilde{\Phi}_1^\dag)$ \cite[Section 1.6]{OreillyJ}.
	Theorems \ref{thm:sdp_lti_noise} and \ref{thm:lqg_noise} confirm that matrices $\tilde{K}$ and $\tilde{L}$ are stabilizable.
	Consequently,  $\Xi_1$ is Schur stable since system \eqref{eq:sys} is controllable and observable under Assumption \ref{as:co}.
	Since being Schur stable requires $\|\Xi_3\| <1$, this can be met if $\Vert \Xi_1\Vert + \Vert \Xi_2\Vert <  1$.
	Bearing this in mind, we  show in the following that if the noise is small enough, then $\Vert \Xi_1\Vert + \Vert \Xi_2\Vert <  1$ always holds.

	Let $n(t) := [w(t)'\ v(t)']'$.
	It follows that 
	\begin{align*}
		\Vert \Xi_2 \Vert \le\bar{n}\sqrt{T \!- 1}\big[\sqrt{n_x}(\Vert \tilde{\Phi}_0^\dag\Vert + \Vert \tilde{\Phi}_2^\dag\Vert \Vert \tilde{K}\Vert) + \sqrt{n_y}\Vert \tilde{L}\Vert\Vert \tilde{\Phi}_1^\dag\Vert\big]. 
	\end{align*}
	Hence, recalling $\delta_r$ defined in Theorem \ref{thm:sdp_lti_noise} and $\delta_g$  defined in Theorem \ref{thm:lqg_noise}, if $\bar{n} \le \delta_{\bar{n}} := \min\{\delta_r, \delta_g, \tilde{\delta}_{\bar{n}}\}$ with
	\begin{align*}
		\tilde{\delta}_{\bar{n}} \le \frac{1-\Vert\Xi_1 \Vert}{\sqrt{T - 1}\big[\sqrt{n_x}(\Vert \tilde{\Phi}_0^\dag\Vert + \Vert \tilde{\Phi}_2^\dag\Vert \Vert \tilde{K}\Vert) + \sqrt{n_y}\Vert \tilde{L}\Vert\Vert \tilde{\Phi}_1^\dag\Vert\big]}
	\end{align*}
	there exist a matrix $P_z \ge 0$ and a constant $\bar{\beta}_1 >0$ such that 
	\setcounter{equation}{35} 
	\begin{equation}
		(\Xi_1 + \Xi_2) P_z (\Xi_1 + \Xi_2)^\prime - P_z = \Xi_3 P_z \Xi_3^\prime - P_z\le -\bar{\beta}_1 I.
	\end{equation}
	Define the extended state by $z(t): = [x(t)' \  e(t)']'$.
	Choosing the Lyapunov function $V(z(t)) = z(t)^\prime P_z z(t)$, one gets that
	\begin{align*}
		& \ V(z(t\! +\!1)) -V(z(t))\\
		= &\, \bigg(\Xi_3 z(t) + \left[
		\begin{matrix}
			I&0\\
			I&-\tilde{L}
		\end{matrix}
		\right]n(t)\!\!\bigg)^\prime \!\cdot\! P_z [\star] - z(t)^\prime P_2 z(t)\\
		\le&\,-\big(\bar{\beta}_1 - \tau_1\Vert P_z\Vert^2 \Vert \Xi_3\Vert^2 \big) \Vert z(t)\Vert^2 \\
		&\,+ \Big(\frac{1}{\tau_1} + \Vert P_z\Vert \Big)(1 + \Vert \tilde{L}\Vert)^2\Vert n(t)\Vert^2\\
		\le&\, -\bar{\beta}_2 V(z(t)) + \tau_2 \Vert n(t)\Vert^2
	\end{align*}
	where $\tau_1 > 0$ is such that Young's inequality $[w(t)'~w(t)'-v(t)'\tilde{L}']P_z\Xi_3 z(t) \le \tau_1\Vert P_z\Vert^2 \Vert \Xi_3\Vert^2\Vert z(t)\Vert^2 + \Vert n(t)\Vert^2(1/\tau_1+\Vert P_z\Vert)$ holds.
	In addition, $\bar{\beta}_2 := \underline{\lambda}_{P_z}(\bar{\beta}_1 - \tau_1\Vert P_z\Vert^2\Vert \Xi_3\Vert^2) > 0$ and $\tau_2 := (1/\tau_1 + \Vert P_z\Vert)\Vert n(t)\Vert^2$.
	Recursively, it holds that
	\begin{equation*}
		V(z(t)) \le (1 - \bar{\beta}_2)^{t}V(z(0)) \!+ \tau_2\!\sum_{i = 0}^{t - 1}(1 - \bar{\beta}_2)^i\Vert n(t\!-i\!-1)\Vert^2.
	\end{equation*}
	Since $\underline{\lambda}_{P_z} \Vert e(t) \Vert^2 \le \underline{\lambda}_{P_z} \Vert z(t) \Vert^2 \le V(z(t)) \le \bar{\lambda}_{P_z} \Vert z(t)\Vert^2$ and $e(0) = x(0)$, the estimation error obeys
	\begin{align}
		\Vert e(t)\Vert &\le \sqrt{\frac{2\bar{\lambda}_{P_z}}{\underline{\lambda}_{P_z}}}(1 - \bar{\beta})^{\frac{t}{2}}\Vert e(0)\Vert \nonumber\\
		&~~~+
		\Big(\tau_2\sum_{i = 0}^{t - 1}(1 - \bar{\beta}_2)^i\Vert n(t-i-1)\Vert^2\Big)^{\frac{1}{2}}
	\end{align}
	which implies RGES for the estimator \eqref{eq:ddoc_noise:est}.
	Similarly, the state satisfies
	\begin{equation*}
		\Vert x(t)\Vert \le \sqrt{\frac{2\bar{\lambda}_{P_z}}{\underline{\lambda}_{P_z}}}(1 - \bar{\beta})^{\frac{t}{2}}\Vert x(0)\Vert +\tilde{\pi}_w(\Vert w_t\Vert_{\infty}) + \bar{v}\sqrt{\frac{2\tau_2}{\bar{\beta_2}}}
	\end{equation*}
	where $\tilde{\pi}_w(\Vert w_t\Vert_{\infty}) := \sqrt{2\tau_2\sum_{i = 0}^{t - 1}(1 - \bar{\beta}_2)^i}\Vert w_t\Vert_{\infty}$ is a $\mathcal{K}$-function of $\Vert w_t\Vert_{\infty}$.
	This establishes ISpS for the system according to Definition \ref{def:iss}. 
\end{proof}
\begin{remark}[\emph{Comparison with existing results}]
	Compared with the results in \cite{turan2021data,wolff2022robust,alanwar2021data}, the proposed method has five advantages.
	\begin{itemize}
		\item [1)]
		\emph{Less offline data.}
		The work \cite{turan2021data} assumes that the offline data $\{[u(i)^\prime~w(i)^\prime]\}_{i = 0}^{T - 1}$ collected 
		is persistently exciting of order $n_x + 1$; as such, at least $T \ge n_x + (n_u + n_x)(n_x + 1)$ data samples should be collected.
		In contrast, the proposed controller only requires $\{u(i) \}_{i = 0}^{T - 1}$ be persistently exciting of order $n_x + 1$, and $T \ge n_x + n_u(n_x + 1)$, which saves at least $n_x(n_x + 1)$ data samples.
		\item [2)]
		\emph{Less conservative assumptions on noise.}
		As discussed in Remark \ref{rmk:as}, in \cite{turan2021data} and \cite{wolff2022robust}, except for Assumption \ref{as:noise}, other constrains on noise are required.
		\item [3)]
		\emph{Less time and computing resources.}
		In \cite{wolff2022robust,alanwar2021data}, an optimization problem is solved at each time during online operation, which costs more time and consumes more computing resources than the proposed method. 
		\item [4)]
		\emph{Easy-to-check and less conservative conditions.}
		In \cite{turan2021data}, a kernel inclusion condition on the offline data is posed for stabilization, which is not easy to check in practice.
		In contrast, the stability conditions of our method are  Lyapunov-type and simple.
		Relative to the practical robust exponential stability in \cite{wolff2022robust}, we ensure RGES when both offline and online data are subject to noise.
		\item [5)]
		\emph{Tackling state estimation and control problems simultaneously.}
		In \cite{wolff2022robust,turan2021data}, both the input $u(k)$ and output $y(k)$ are used to estimate the state $x(k)$, and hence an additional controller should be implemented to generate $u(k)$.
		Our proposed controller estimates the state $x(k)$ requiring only the previous output $y(k - 1)$, and a simple feedback controller is designed based on the estimated state.
		In other words, the state estimation and control are jointly addressed involving a single dynamics, which simplifies  practical implementations.
	\end{itemize}
\end{remark}

\section{Numerical Examples}
In this section, we compare the proposed schemes with related results in \cite{turan2021data,alanwar2021data,wolff2022robust} through two numerical examples.
The following simulations were run on a Lenovo laptop with a 20-core i7-12700H processor at 2.3GHz.

\subsection{Noise-free offline data}
In the first experiment, we considered the discretized version of the open-loop unstable batch reactor system in \cite{persis2020data} using a sampling period of $0.1$s, with
matrices $A$, $B$ and $C$ in \eqref{eq:sys} given by
\begin{align*}
	&A = \left[
	\begin{matrix}
		1.178& 0.001& 0.511& -0.403\\
		-0.051& 0.661 &-0.011& 0.061\\
		0.076 &0.335& 0.560& 0.382\\
		0& 0.335& 0.089& 0.849
	\end{matrix}
	\right], \\
	&B = \left[
	\begin{matrix}
		0.004& -0.087\\
		0.467& 0.001\\
		0.213& -0.235\\
		0.213& -0.016
	\end{matrix}
	\right], ~~C= \left[
	\begin{matrix}
		1& 0& 1& -1\\
		0 &1& 0& 0
	\end{matrix}
	\right].
\end{align*}

All state estimation schemes were implemented in MATLAB.
We collected noise-free trajectories  of length $T = 15$
from random initial conditions with random inputs  uniformly generated from $[-1,1]$.
SDP \eqref{eq:lqg_ideal} was solved with $N_x = 0.02I_4$ and $N_y = 0.02I_2$ by CVX \cite{grant2014cvx}, yielding  
\begin{equation*}
	L_* = \left[
	\begin{matrix}
		0.7034 &   0.0385\\
		-0.0228 &   0.3812\\
		0.2753  &  0.4023\\
		0.0619  &  0.4066
	\end{matrix}
	\right]
\end{equation*}
which is the same as the observer $\bar{L}$ found using the MATLAB command ${\rm dare}(A',C',N_x, N_y)$ solving the DARE equation.
Moreover, noise $w(t)$ and $v(t)$ during online operation were generated uniformly from $[-0.02, 0.02]$ and the state and estimated state trajectories are depicted in Fig. \ref{fig:kalman1}, which verifies the effectiveness of the proposed method in Section \ref{sec:lqg:i}.

\begin{figure}
	\centering
	\includegraphics[width=8.2cm]{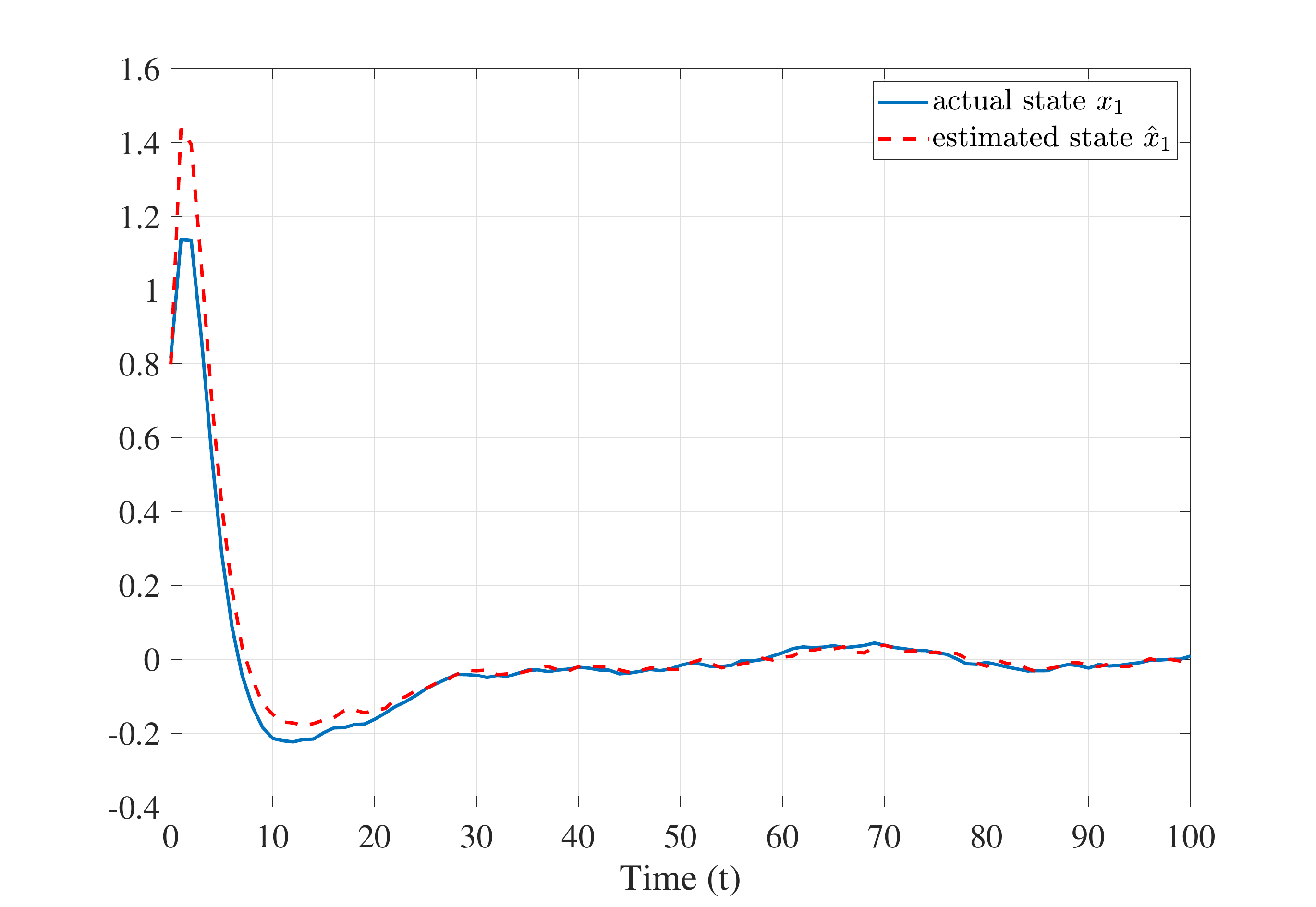}\\
	\caption{State trajectory by the controller \eqref{eq:ddoc}.}\label{fig:kalman1}
	\centering
\end{figure}

\subsection{Noisy offline data}
In the second experiment, we examine the performance of our proposed robust data-driven control method relative to three recent competing methods, including the data-based UIO in \cite{turan2021data}, data-based MHE in \cite{wolff2022robust}, and the data-driven set-based method in \cite{alanwar2021data} on the batch reactor system.
As discussed in Remark \ref{rmk:as}, the dimension of both process and measurement noise in \cite{turan2021data} should not exceed that of the output, i.e., \eqref{eq:sys:x} is replaced by $x(t + 1) =Ax(t)+ B u(t) + E_ww(t)$ with ${\rm rank}(E_w) \le n_y$. 
Therefore, we first compare the performance of the proposed method and the UIO method in the case of ${\rm rank}(E_w) = n_y =  2$, which is then followed by comparison of the four methods under ${\rm rank}(E_w) = n_y =  4$.
It is worth clarifying that the methods in \cite{turan2021data,wolff2022robust,alanwar2021data} were designed for autonomous systems, and therefore they were simulated with the input sequence generated from \eqref{eq:ddoc_noise}.

\begin{figure}
	\centering
	\includegraphics[width=8.2cm]{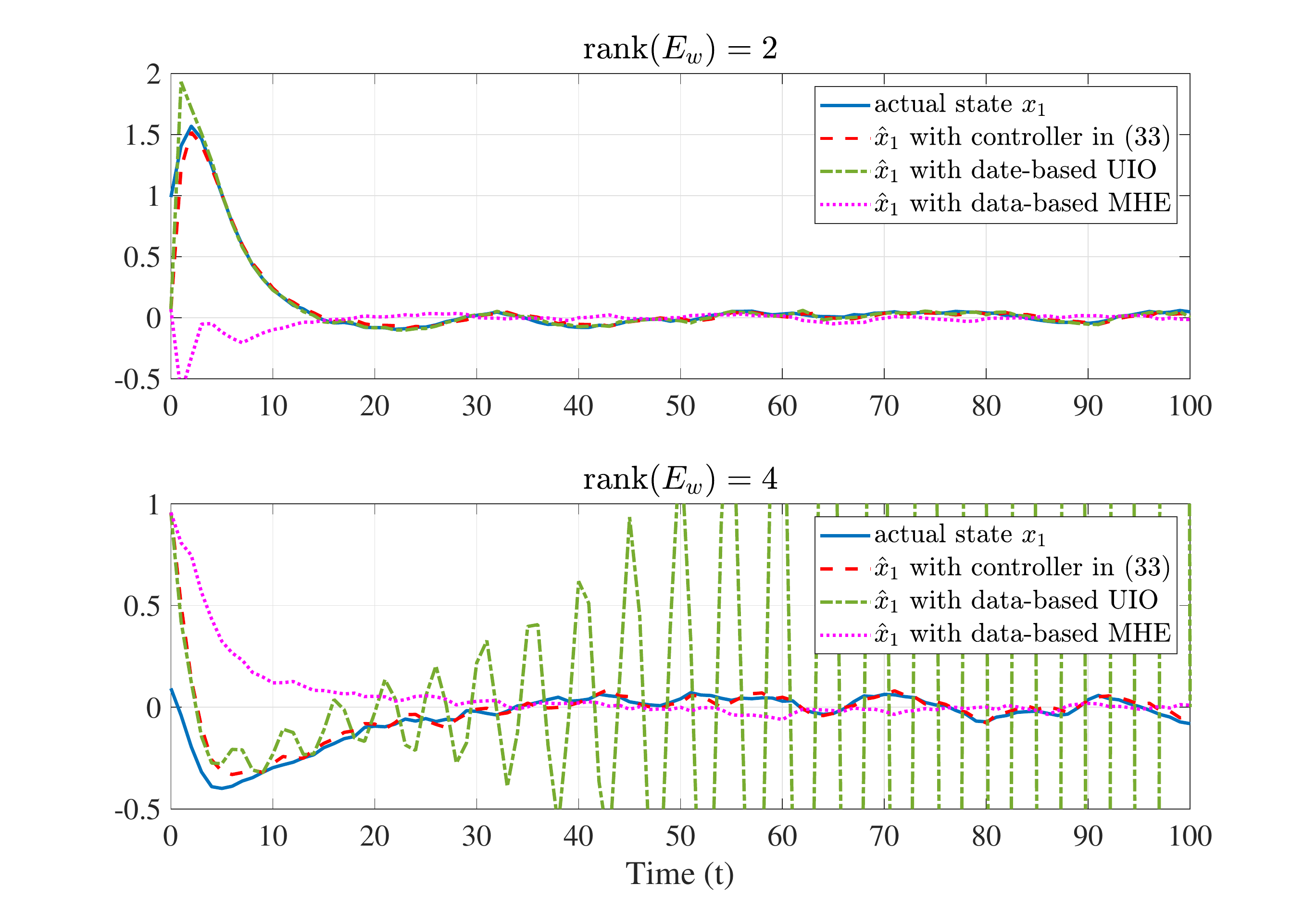}\\
	\caption{Performance of the proposed controller \eqref{eq:ddoc_noise}, data-based UIO state estimator \cite{turan2021data} and data-based MHE \cite{alanwar2021data} with $\bar{n} = 0.2$ under ${\rm rank} (E_w) = 2$ (top panel) and ${\rm rank} (E_w) = 4$ (bottom panel).
	}\label{fig:lqgvsuio}
	\centering
\end{figure}

Considering ${\rm rank}(E_w) = n_y =  2$, noise $w(t)$ and $v(t)$ were uniformly generated at random from $[-0.02, 0.02]$.
In addition SDP \eqref{eq:lqr_noise} was solved with $W_x = I_4$, $W_u = I_2$, $\alpha_1 = 0.2$ and SDP \eqref{eq:lqg_noise} was solved with $N_x = 0.02I_4$, $N_y = 0.02I_2$, $\alpha_2 = 0.2$. 
The offline data were collected following the same procedure as described in the previous experiment.
The top panel of Fig. \ref{fig:lqgvsuio} shows the state estimation performance of the proposed method \eqref{eq:ddoc_noise} (dashed line), the data-based UIO method in \cite{turan2021data} (dash-dotted line) and the data-based MHE method (dotted line) \cite{wolff2022robust}.
It can be observed that the state converges under the proposed robust controller with noisy data and the estimation error remains small.
The averaged state estimation error $\bar{e}: = (1/100)\sum_{t = 1}^{t = 100} \Vert \hat{x}(t) - x(t)\Vert$ of the proposed method is $0.489$, which is smaller than $0.545$ that of the data-based UIO and $1.815$ that of the data-based MHE.

Under ${\rm rank}(E_w) = 4$, we plot the estimated states by the three methods in the bottom panel of Fig. \ref{fig:lqgvsuio}.
Evidently, the estimated state by the proposed method remains convergent, while that by the UIO method diverges.
Moreover, the average estimation error of controller \eqref{eq:lqg_noise} $0.889$ is several times smaller than that of MHE method $2.311$.

\begin{figure}
	\centering
	\includegraphics[width=8.5cm]{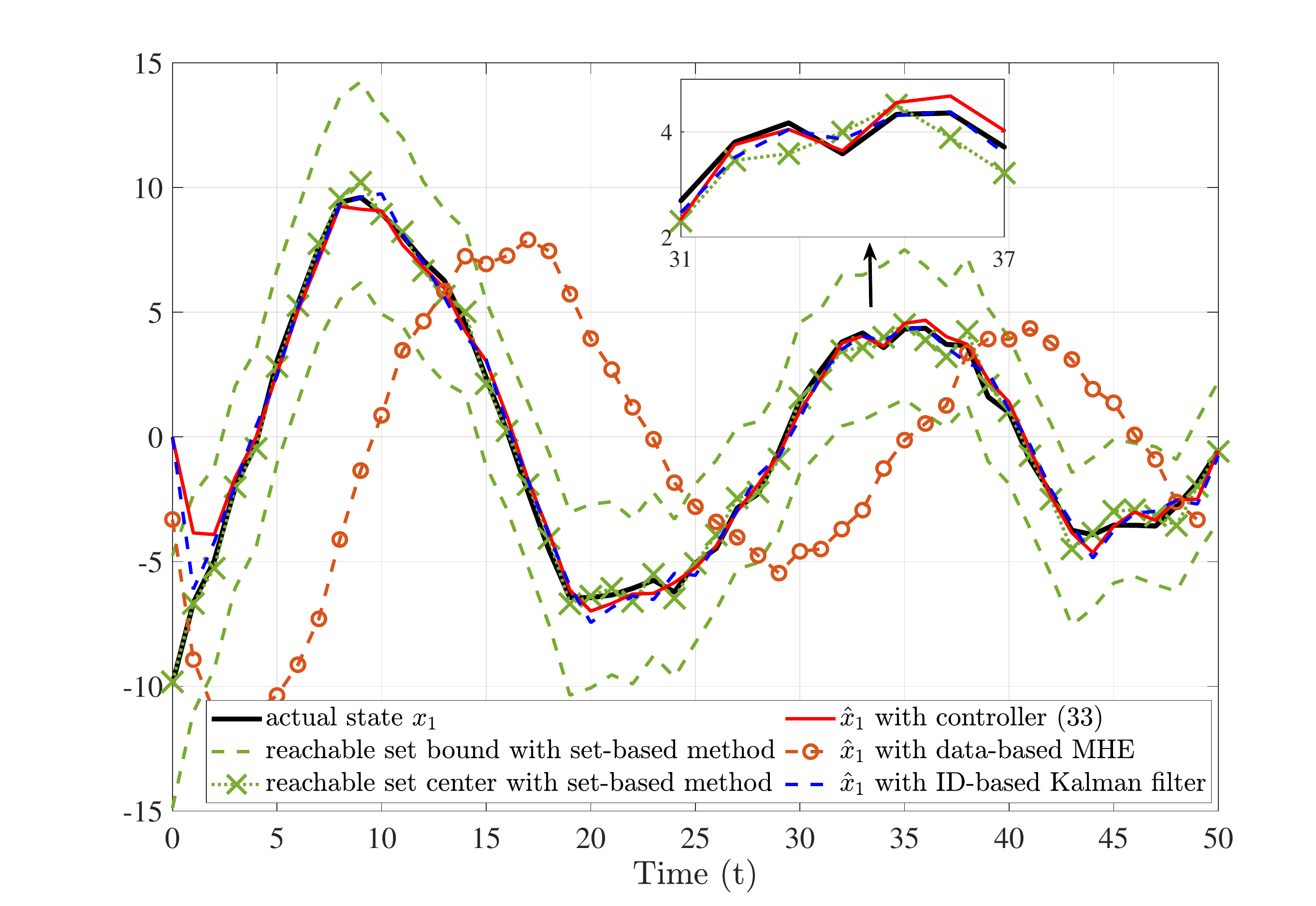}\\
	\caption{
		State estimation performance of the data-driven set-based method \cite{alanwar2021data},  data-based MHE  \cite{wolff2022robust},   ID-based Kalman filter, and proposed controller \eqref{eq:ddoc_noise}.
	}\label{fig:lqgvszono}
	\centering
\end{figure}

The working assumption of \cite{alanwar2021data} is slightly different from the others, which assumes the knowledge of matrix $C$.
To fulfill Assumption  \ref{as:iso-io}, an extension of this method is made, which is discussed in Appendix \ref{sec:appendix}.
The following simulation was proceeded by using the implicit intersection with constrained zonotopes in \cite{alanwar2021data}, which was shown to numerically outperform its baseline methods.
Moreover, the set-based method returns a bound on the reachable set of state $x(t)$ and we use its center as the estimated state.
Consider the same input-driven variant of the rotating target system as in \cite{alanwar2021data}
\begin{align*}
	A = \left[
	\begin{matrix}
		0.9455 \!&-0.2426\\
		0.2486  \! & 0.9455
	\end{matrix}
	\right], B = \left[
	\begin{matrix}
		0.1\\ 0
	\end{matrix}
	\right], C = \left[
	\begin{matrix}
		1&\! 0.4\\
		0.9 \!&-1.2\\
		-0.8\!& 0.2\\
		0 \!&0.7
	\end{matrix}
	\right] 
\end{align*}
which is an autonomous system and whose control inputs were randomly generated from $[-1,1]$.
Noise $w(t)$ and $v(t)$ were generated obeying a uniform distribution over $[-1, 1]$.
SDP \eqref{eq:lqg_noise} was solved with $N_x =I_4$, $N_y = I_2$ and $\alpha_2 = 1$.
Fig. \ref{fig:lqgvszono} compares the state estimation performance of the proposed method \eqref{eq:ddoc_noise} (red solid), the data-driven set-based method in \cite{alanwar2021data} using constraint zonotopes (green dotted with cross), the data-based MHE in \cite{wolff2022robust} (orange dashed with circle), and the identification (ID)-based Kalman method (blue dashed).
System ID was performed using the \emph{n4sid} toolbox with default option setting based on the same set of offline data.
Clearly, the proposed controller \eqref{eq:ddoc_noise} stabilizes the system.
Furthermore, the set-based method achieves better performance at the beginning, which is because $u(0)$ and $y(0)$ are used to estimate $x(0)$, yet $\hat{x}(0)$ is randomly generated for the proposed method.
Let $\bar{t}$ denote the average computation time per iteration.
Comparison of $\bar{e}$ and $\bar{t}$ between different methods is presented in Table \ref{tab:com}.
The proposed controller enjoys comparable state estimation performance while being (much) more computationally efficient that competing data-based methods. 

\begin{table}
	\centering
	\caption{Comparison between different methods with $\bar{n} = 1$}
	\label{tab:com}
	\renewcommand{\arraystretch}{1.2}
	\setlength{\tabcolsep}{10pt}
	\begin{tabular}{lcc}
		\toprule 
		Method & $\bar{e}$ & $\bar{t}$ (sec) \\
		\midrule
		Data-driven controller \eqref{eq:ddoc_noise}&$0.577$&$10^{-4}$\\
		Data-based	MHE in\cite{wolff2022robust}&$6.738$&$0.006$\\
		Data-driven set-based method in \cite{alanwar2021data}&$0.300$&$5.125$\\
		System ID-based Kalman filter&$0.608$&$0.005$\\
		\bottomrule
	\end{tabular}
\end{table}


\section{Conclusions}

This paper presented a solution to the problem of jointly
estimating and controlling the state of an unknown linear systems subject
to both process and measurement noise.  Specifically, this study proposed
data-based linear quadratic Gaussian (LQG) control formulations based on
offline data collection, both in noise-free and noisy settings.  For
noise-free offline data, a convex semi-definite programming (SDP) method is
used to construct a steady-state Kalman filter.  Building upon existing
research on data-driven linear quadratic regulator (LQR), this work also
presented a data-based version of the LQG controller that is equivalent to
its model-based counterpart.  For noisy data, a soft constraint is added to
the SDP to develop a robust SDP method that provides a robust observer
gain.  The proposed data-driven design enjoys robust global exponential
stability (RGES) of the state estimator and input-to-state practical
stability (ISpS) of the closed-loop system.  Numerical examples illustrate
the efficacy of the proposed controllers.


\section{Appendix}\label{sec:appendix}
In \cite{alanwar2021data}, two approaches were proposed to estimate a reachable set for the state, in which a set $\tilde{\mathcal{R}} \subset \mathbb{R}^{n_x}$ such that $x(t)\in\tilde{\mathcal{R}}$ for all $t \in \mathbb{N}$ is constructed using the offline data and updated on-the-fly to yield a more accurate set $\hat{\mathcal{R}}_t$ using online measurements.
The two approaches are referred to as  reverse mapping and implicit intersection.
The main idea of the former is to explicitly recover the state from the output based on the SVD of matrix $C$.
However, since $C$ is unknown in this paper, the first approach cannot be employed directly.
Hence, this section focuses on extending the implicit intersection approach to fit our setting in Assumption \ref{as:iso-io} using zonotopes.

Assumption \ref{as:noise} is restated in terms of zonotopes as follows. Noise $w(t) \in \mathcal{Z}_w$ and $v(t) \in \mathcal{Z}_v$ are assumed to belong to the bounding zonotopes $\mathcal{Z}_w = \langle c_w, G_w \rangle \subset \mathbb{R}^{n_x}$ and $\mathcal{Z}_v = \langle c_v, G_v \rangle \subset \mathbb{R}^{n_y}$, respectively.
Define the matrix zonotope of pair $(A,B)$ as
\begin{equation}
	\mathcal{N}_{\Sigma} = \{ [A~ B]| X_1 = AX_0 + BU_0 + W_0, \ W_0 \in \mathcal{M}_w\}
\end{equation}
where $\mathcal{M}_w = \langle C_{\mathcal{M},w}, G_{\mathcal{M},w}^{(1:\xi T)}\rangle$ is the zonotope of process noise sequence $W_0$.
Similarly, the matrix zonotope of $C$, i.e., $\mathcal{M}_C = \langle C_{\mathcal{M},C}, G_{\mathcal{M},C}^{(1:\xi T)}\rangle$, is given by
\begin{align*}
	\mathcal{M}_C = (Y_0 \!- V_0)X_0^\dag \!= \Big(Y_0 - C_{\mathcal{M},v} \!- \sum_{i  = 1}^{\xi_{\mathcal{M},v}}\beta^{(i)}_{\mathcal{M},v} G_{\mathcal{M},v}\Big)X_0^\dag
\end{align*}
where $X_0^{\dag}$ exists under the condition \eqref{eq:rank:phi}, and $\mathcal{M}_v = \langle C_{\mathcal{M},v}, G_{\mathcal{M},v}^{(1:\xi T)}\rangle$ is the matrix zonotope of measurement noise $V_0$.
The set $\tilde{R}_t=\langle\tilde{c}_t, \tilde{G}_t\rangle$ over-approximates the exact reachable set with $\tilde{R}_{t + 1} = \mathcal{M}_{\Sigma}(\tilde{R}_t \times \mathcal{U}_t) + \mathcal{Z}_w$ and
\begin{equation*}
	\mathcal{M}_{\Sigma} = (X_1 - \mathcal{M}_w)
	\left[
	\begin{matrix}
		X_0\\
		U_0
	\end{matrix}
	\right]^{\dag}.
\end{equation*}
The true state can be expressed as $x_t = \tilde{c}_t + \tilde{G}_t z_t$ for some weight vector $z_t$.
Proposition 2 in \cite{alanwar2021data} can be restated as follows.
\begin{proposition}
	The intersection of $\tilde{R}_t$ and the region for $x(t)$ corresponding to $y(t)$ with noise $v(t) \in \mathcal{Z}_v$ satisfying \eqref{eq:sys} can be over-approximated by the zonotope $\hat{R}_t = \langle \hat{c}_t, \hat{G}_t \rangle$ with 
	\begin{align*}
		\hat{c}_t &= \tilde{c}_t + \lambda_t^*(y(t) - C_{\mathcal{M}_C} - C_{v,i})\\
		\hat{G}_t &= \!\big[
		\begin{matrix}
			(I \!- \lambda_t^* C_{\mathcal{M}_C})\tilde{G}_t&-\!\lambda_t^* G_{\mathcal{M}_C}&-\!\lambda_t^*G_{v}&-\!\lambda_t^* G_{\mathcal{M}_C}\tilde{G}_t
		\end{matrix}
		\big]
	\end{align*}
	where $\lambda_t^*$ is the optimal solution of the following problem
	\begin{equation*}
		\min_{\lambda_t}\big\Vert\! \big[
		\begin{matrix}
			(\!I \!-\! \lambda_t C_{\!\mathcal{M}_C})\tilde{G}_t&\!\!-\!\lambda_t G_{\!\mathcal{M}_C}&\!-\!\lambda_tG_{v}&\!-\!\lambda_t G_{\!\mathcal{M}_C}\!\tilde{G}_t	
		\end{matrix}
		\big]\! \big\Vert_F^2
	\end{equation*}
	and $\Vert \cdot \Vert_F^2$ denotes the matrix Frobenius norm.
\end{proposition}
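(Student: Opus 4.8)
The plan is to mirror the measurement-update argument behind \cite[Proposition 2]{alanwar2021data}, but to propagate through it the extra uncertainty caused by $C$ being unknown, now encoded in the matrix zonotope $\mathcal{M}_C$. I would begin from the two memberships available at time $t$: by construction of $\tilde{R}_t$ the true state obeys $x(t)=\tilde{c}_t+\tilde{G}_t z_t$ for some $z_t$ with $\|z_t\|_\infty\le 1$; and since $C\in\mathcal{M}_C$ (which holds because $\mathcal{M}_C=(Y_0-\mathcal{M}_v)X_0^{\dag}$ contains $(Y_0-V_0)X_0^{\dag}=C$ under \eqref{eq:rank:phi}, the true $V_0$ lying in $\mathcal{M}_v$) and $v(t)\in\mathcal{Z}_v$, one may write $C=C_{\mathcal{M}_C}+\Delta_C$ and $v(t)=c_v+\Delta_v$ with $\Delta_C\in\langle 0,G_{\mathcal{M}_C}\rangle$ and $\Delta_v\in\langle 0,G_v\rangle$. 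Using $y(t)=Cx(t)+v(t)$, the identity $\lambda_t\bigl(y(t)-Cx(t)-v(t)\bigr)=0$ holds for every matrix $\lambda_t$ of compatible dimensions; I would add this zero to $x(t)=\tilde{c}_t+\tilde{G}_t z_t$ and substitute the three parameterizations.

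Next I would collect terms. The part independent of $z_t,\Delta_C,\Delta_v$ becomes the center $\hat{c}_t=\tilde{c}_t+\lambda_t\bigl(y(t)-C_{\mathcal{M}_C}\tilde{c}_t-c_v\bigr)$ (this is the stated $\hat{c}_t$, with $C_{\mathcal{M}_C}\tilde{c}_t$ the predicted output of the a priori center and $c_v$ the noise center). The $z_t$-linear part yields the generator block $(I-\lambda_t C_{\mathcal{M}_C})\tilde{G}_t$; after substituting $\Delta_C=\sum_i\beta^{(i)}G_{\mathcal{M}_C}^{(i)}$ and $\Delta_v=\sum_j\gamma^{(j)}G_v^{(j)}$, the terms $-\lambda_t\Delta_C\tilde{c}_t$ and $-\lambda_t\Delta_v$ yield the blocks $-\lambda_t G_{\mathcal{M}_C}\tilde{c}_t$ and $-\lambda_t G_v$. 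The remaining term $-\lambda_t\Delta_C\tilde{G}_t z_t$ is bilinear in the interval variables; here I would invoke the standard matrix-zonotope times zonotope over-approximation: since $|\beta^{(i)}|\le 1$ and $\|z_t\|_\infty\le 1$, the set $\{\sum_i\beta^{(i)}G_{\mathcal{M}_C}^{(i)}\tilde{G}_t z\}$ lies in the centered zonotope whose generator list is the concatenation $G_{\mathcal{M}_C}\tilde{G}_t$, which produces the last block $-\lambda_t G_{\mathcal{M}_C}\tilde{G}_t$. Assembling the four blocks shows $x(t)\in\langle\hat{c}_t(\lambda_t),\hat{G}_t(\lambda_t)\rangle$ for \emph{every} $\lambda_t$, and this zonotope over-approximates $\tilde{R}_t\cap\{x:\exists\,C\in\mathcal{M}_C,\ v\in\mathcal{Z}_v,\ y(t)=Cx+v\}$.

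Finally, because the inclusion is valid for all $\lambda_t$, I would fix $\lambda_t^*$ as the minimizer of a scalar size measure of $\hat{G}_t(\lambda_t)$; taking the squared Frobenius norm $\|\hat{G}_t(\lambda_t)\|_F^2$ makes this an unconstrained convex quadratic (least-squares) problem in the entries of $\lambda_t$, since every block of $\hat{G}_t(\lambda_t)$ is affine in $\lambda_t$, so it admits a closed-form minimizer $\lambda_t^*$, and substituting it recovers exactly the $\hat{c}_t,\hat{G}_t$ in the statement. The one genuine difficulty compared with \cite[Proposition 2]{alanwar2021data} is the bilinear cross term $\Delta_C\tilde{G}_t z_t$, which is absent when $C$ is known: the care needed is to envelope it \emph{soundly}, i.e.\ to introduce the columns of $G_{\mathcal{M}_C}\tilde{G}_t$ as a fresh, independent generator block decoupled from the $z_t$ already appearing in the first block, which is what makes the inclusion rigorous at the usual price of zonotopic conservatism.
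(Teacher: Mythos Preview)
The paper does not actually prove this proposition; it is presented in the appendix as a restatement of \cite[Proposition~2]{alanwar2021data} for the setting where $C$ is unknown, with no accompanying argument. Your plan is precisely the intended extension of that original proof: add the vanishing quantity $\lambda_t\bigl(y(t)-Cx(t)-v(t)\bigr)$ to the zonotopic representation of $x(t)$, parameterize $C\in\mathcal{M}_C$ and $v(t)\in\mathcal{Z}_v$ through their centers plus interval-weighted generators, collect the affine terms into the new center and generator blocks, and soundly envelope the single bilinear term $\Delta_C\tilde{G}_t z_t$ by introducing the columns of $G_{\mathcal{M}_C}\tilde{G}_t$ as fresh independent generators. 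This is correct and is the natural route.

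Your parenthetical remark on the center is also apt: dimensionally the stated $\hat{c}_t$ only makes sense if $C_{\mathcal{M}_C}$ is read as $C_{\mathcal{M}_C}\tilde{c}_t$ and $C_{v,i}$ as the noise-zonotope center $c_v$, and by the same derivation the second generator block $-\lambda_t^* G_{\mathcal{M}_C}$ should carry the factor $\tilde{c}_t$. These appear to be typographical slips in the appendix rather than flaws in your argument.
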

Based on the above Proposition, the state estimator can be obtained leveraging \cite[Algorithm 1]{alanwar2021data}. 
\bibliographystyle{IEEEtranS}
\bibliography{ddest}

\end{document}